\DeclareMathAlphabet{\mathcal}{OMS}{cmsy}{m}{n}
\newcommand{\mc}{\mathcal}
\newtheorem{definition}{Definition}
\newtheorem{theorem}{Theorem}
\newtheorem{lemma}{Lemma}
\newtheorem{corollary}{Corollary}
\newcounter{itemnum}
\newcommand{\case}[1]{\vskip12pt\noindent{#1:}}
\newcommand{\refcase}[2][]{\vskip12pt\noindent{\hyperref[#1]{#2:}}}
\newcommand{\st}{%
  \nonscript\;
  \ifnum\currentgrouptype=16
    \middle\vert
  \else
    \mid
  \fi
  \nonscript\;}
\newcommand{\ie}{{\it i.e.}}
\newcommand{\eg}{{\it e.g.}}
\newcommand{\seq}[3]{\ifstrempty{#1}{\cdot}{#1}; \ifstrempty{#2}{\cdot}{#2} \vdash #3}
\newcommand{\deriv}[2]{{\begin{array}{c}#1\\#2\end{array}}}
\newcommand{\gseq}[2]{\ifstrempty{#1}{\cdot}{#1}\vdash #2}
\newcommand{\existsx}[2]{\exists #1.\,#2}
\newcommand{\quantx}[3]{Q #2:#1.#3}
\newcommand{\typedforallx}[3]{\forall #2:#1.#3}
\newcommand{\typedexistsx}[3]{\exists #2:#1.#3}
\newcommand{\imp}{\supset}
\newcommand{\ind}{\stackrel{\mathclap{\scalebox{0.66}{\mbox{$\mu$}}}}{=}}
\newcommand{\defn}{\stackrel{\mathclap{\scalebox{0.5}{\mbox{$\Delta$}}}}{=}}
\newcommand{\dfn}{\mathrm{defn}}
\newcommand{\fpdfn}[3]{#1 \defn_{#2} #3}
\newcommand{\inddfn}[3]{#1 \ind_{#2} #3}
\newcommand{\IR}{$\mu\mc R$\xspace}
\newcommand{\IL}{$\mu\mc L$\xspace}
\newcommand{\muR}{\IR}
\newcommand{\muL}{\IL}
\newcommand{\DR}{$\Delta\mc R$\xspace}
\newcommand{\DL}{$\Delta\mc L$\xspace}
\newcommand{\axiom}{\textsc{Ax}\xspace}
\newcommand{\multc}{$\textit{mc}$\xspace}
\newcommand{\impR}{$\imp\!\!\mc R$\xspace}
\newcommand{\allR}{$\forall\mc R$\xspace}
\newcommand{\allL}{$\forall\mc L$\xspace}
\newcommand{\exR}{$\exists\mc R$\xspace}
\newcommand{\exL}{$\exists\mc L$\xspace}
\newcommand{\multcut}{mc}
\newcommand{\G}{$\mc G$\xspace}
\newcommand{\LD}{$\text{LD}$\xspace}
\newcommand{\LDI}{$\text{LD}^{\mu}$\xspace}
\newcommand{\LDIinf}{$\text{LD}^{\mu}_\infty$\xspace}
\newcommand{\NJI}{$\mu\text{NJ}$\xspace}
\newcommand{\lvl}{\mathrm{lvl}}
\newcommand{\ground}{\mathrm{ground}}
\newcommand{\range}{\mathrm{range}}
\newcommand{\lambdax}[2]{\lambda #1.\,#2}
\newcommand{\app}{\ }
\newcommand{\mkconst}[1]{\mbox{\sl \color{purple}{#1}}}
\newcommand{\lamtm}{\mkconst{lam}}
\newcommand{\apptm}{\mkconst{app}}
\newcommand{\suc}{\mkconst{s}}
\newcommand{\zero}{\mkconst{z}}
\newcommand{\z}{\zero}
\newcommand{\nil}{\mkconst{nil}}
\newcommand{\cons}{\mkconst{cons}}
\newcommand{\unit}{\mkconst{unit}}
\newcommand{\arr}{\mkconst{arrow}}
\renewcommand{\star}{\mkconst{star}}
\newcommand{\mktype}[1]{{\mbox{\sl {#1}}}}
\renewcommand{\o}{\mktype{\bf o}}
\newcommand{\tm}{\mktype{tm}}
\newcommand{\ty}{\mktype{ty}}
\newcommand{\nat}{\mktype{nat}}
\newcommand{\mkpred}[1]{\mbox{\sl \color{RoyalBlue}{#1}}}
\newcommand{\ev}{\mkpred{ev}}
\newcommand{\odd}{\mkpred{odd}}
\newcommand{\eq}{\mkpred{eq}}
\newcommand{\append}{\mkpred{append}}
\newcommand{\red}{\mkpred{red}}
\newcommand{\sn}{\mkpred{sn}}
\newcommand{\step}{\mkpred{step}}
\newcommand{\type}{\mkpred{type}}
\newcommand{\p}{\mkpred{p}}
\title{Ground Stratification for a \\ Logic of Definitions with Induction}
\author{Nathan Guermond\qquad\qquad Gopalan Nadathur
  \email{guerm001@umn.edu\qquad\qquad\qquad ngopalan@umn.edu}
  \institute{University of Minnesota Twin-Cities\\
  Minnesota, USA}
  }
\newcommand{\titlerunning}{Ground Stratification for a Logic of Definitions with Induction}
\newcommand{\authorrunning}{N. Guermond and G. Nadathur}
\begin{document}
\maketitle

\begin{abstract}
The logic underlying the Abella proof assistant includes mechanisms
for interpreting atomic predicates through fixed point definitions
that can additionally be treated inductively or co-inductively.
However, the original formulation of the logic includes a strict
stratification condition on definitions that is too restrictive for
some applications such as those that use a logical relations
based approach to semantic equivalence.
Tiu has shown how this restriction can be eased by utilizing a weaker
notion referred to as ground stratification.
Tiu's results were limited to a version of the logic that does not
treat inductive definitions.
We show here that they can be extended to cover such definitions. 
While our results are obtained by using techniques that have been
previously deployed in related ways in this context, their use is
sensitive to the particular way in which we generalize 
the logic.
In particular, although ground stratification may be used with 
arbitrary fixed-point definitions, we show that weakening
stratification to this form for inductive definitions leads to
inconsistency.
The particular generalization we describe accords well with the way
logical relations are used in practice.
Our results are also a intermediate step to building a more flexible
form for definitions into the full logic underlying Abella, which
additionally includes co-induction, generic quantification, and a
mechanism referred to as nominal abstraction for analyzing occurrences
of objects in terms that are governed by generic quantifiers.
\end{abstract}

\section{Introduction}\label{sec:intro}

This paper concerns a family of first-order predicate logics that
originate from the work of McDowell and Miller~\cite{MDM00} and that
have culminated in the logic \G that underlies the proof assistant
Abella~\cite{GMN11}.
These logics endow a conventional predicate logic with the capability
of treating predicate constants as defined symbols, to be interpreted
via formulas that have been associated with them by a definition.
More specifically, the definitions that are associated with predicate
constants are given a fixed-point reading that can be further refined
to correspond to the least or greatest fixed-point, thereby adding the
capability of inductive and co-inductive reasoning to the logic.
These logics have been especially useful in encoding and reasoning
about object systems that are described in a rule-based and relational
fashion: the rules in the object system description can be translated
into definitions of predicate constants that represent the relevant
relations and the treatment of definitions in the logic provides a
transparent means for capturing the informal style of reasoning
associated with rule-based specifications.

An important aspect of the logics of interest is that the forms of
definitions must be sufficiently constrained to ensure consistency.
The definition of a predicate constant can include a use of that
constant itself, thereby supporting recursive specifications.
However, a definition must not be circular in that it assumes its own
existence in its construction.
In logics of definitions, this requirement translates into
restrictions on the negative occurrences of the predicate constant in
the formula defining it.
The condition imposed in the logic described in \cite{MDM00}, which
has carried over to the logic \G, is that the predicate constant must
not appear to the left of a top-level implication symbol in that
formula. 
Concretely, this condition takes the form of a \emph{stratification}
requirement: there must be an ordering on predicate
constants that determines which of them have been already defined and
can therefore be used negatively in the definition of another
predicate constant. 
Unfortunately, the stratification restriction based on predicate names
is too strong for some applications.
A prime example of this is that of logical relations that are often
used in reasoning about programming languages properties.
The definitions of these relations are typically indexed by types 
and assume their own definition, albeit for structurally smaller
types. 

This work is part of the effort to allow for more permissive definitions
so as to support the described applications.
In past work~\cite{Tiu12}, Tiu has developed the notion of
\emph{ground stratification} that effectively allows for the building
in of the arguments of an atomic formula into the stratification ordering.
The logic considered by Tiu limits definitions to a generic
fixed-point variety with inductive reasoning being realized through
natural number induction.
The results in this paper are to be viewed in the context of Tiu's
work and have a twofold character.
First, we show that stronger conditions must be satisfied by
definitions in the case of predicate constants that are to be treated 
inductively.
In particular, we show that easing the restriction to ground
stratification for these constants can lead to an inconsistent logic.
Second, we show that if a stronger form of stratification for such
constants is coupled with ground quantification for predicate
constants that are interpreted via generic fixed-points, then the
logic is consistent. 
Proofs of consistency for logics typically proceed by showing a
property called cut-elimination for them.
However, such a proof is elusive for our logic.
Instead, we reduce consistency for it to consistency for a ground
version, for which we show a cut-elimination result.
In providing such a proof, we follow the lead of \cite{Tiu12}, using 
ideas from \cite{Tiu04} in the additional treatment of induction.

The rest of the paper is organized as follows.
In the next section we present the logic of study, describing in its
context the particular mix of stratification conditions that are
needed for consistency.
Section~\ref{sec:strat-and-consistency} exposes the need for the
restrictions imposed on the allowable definitions.
Section~\ref{sec:examples} presents some examples that show, amongst
other things, that the logic described is capable of encoding a
reasoning style that is based on using logical relations.
Section~\ref{sec:consistency} sketches the proof of consistency for
the logic.\footnote{The complete development of these ideas can be
found at the URL
\href{https://z.umn.edu/strat-proofs}{https://z.umn.edu/strat-proofs}.}
We conclude the paper in Section~\ref{sec:conclusion} with a discussion of
related and future work.

\section{A Logic with Definitions and Induction}\label{sec:logic}

The logic that we consider in this paper, which we call \LDI, has 
an intuitionistic, first-order version of Church's Simple Theory of
Types~\cite{Ch40} as its core.
It extends this logic by allowing atomic predicates to be
treated as defined symbols, to be interpreted via a collection of
clauses that are given a fixed-point interpretation, which can be
further refined to correspond to the least fixed-point; the
restriction to the least-fixed point amounts to giving the clauses
defining the predicate an inductive reading. 
Qualitatively, \LDI is an extension of Tiu's \LD with inductive
definitions.
It can also be viewed as a fragment of the logic \G~\cite{GMN11},
which underlies the Abella proof assistant~\cite{Baelde14jfr}, that
has been enriched with a more permissive form for definitions.
In the subsections below, we outline this logic, focusing mainly on
the notions of definitions and, more specifically, on how they extend
what is permitted in \G. 

\subsection{Syntax}

The terms of \LDI are based on the expressions in
Church's Simple Theory of Types, in which we assume a finite set of base
types $\iota_1,\ldots,\iota_n$, function types $\alpha\to\beta$, and a
distinguished type $\o$ of propositions.
We assume we are given a \emph{signature} $\Sigma$, which is a set of
type annotated constants of the form $\kappa : \tau$.
Terms in \LDI are then the well-typed terms in the simply typed lambda
calculus that are constructed using the constants in the signature
$\Sigma$ and variables from a \emph{variable context} $\mc 
X$, which is a finite set of distinct type annotated variables $x :
\tau$.
Since a term $t$ only exists with respect to a signature
$\Sigma$ and a variable context $\mc X$, and since $\Sigma$ is fixed,
we will say that $t$ is well formed with respect to $\mc X$, or that
$t$ \emph{lies over} $\mc X$.
In particular this means that $\mc X$ must contain all free variables
in $t$, but may contain more.
A term is \emph{ground} if it lies over the empty variable context
$\emptyset$.
We denote the set of ground terms of type $\alpha$ by
$\ground(\alpha)$.
We observe that the type of $t$ is uniquely determined by
$\Sigma$ and $\mc X$, and thus we denote $\mc X \vdash_\Sigma t :
\tau$ whenever $t$ lies over $\mc X$.
Finally, we note that terms are considered to be equal modulo
$\alpha$-, $\beta$-, and $\eta$-conversion and we shall represent them
by their normal forms modulo these rules, which are known to exist.

We call a type a \emph{first-order} type if it does not contain $\o$,
and a \emph{predicate} type if it is $\o$ or
of the form $\tau\to \omega$ for a first order type $\tau$ and a
predicate type $\omega$.
If $p:\omega$ belongs to $\Sigma$ for a predicate type $\omega$, then
$p$ is said to be a predicate constant or symbol.
If $\omega$ is $\o$, then $p$ may also be referred to as a
propositional symbol.
A \emph{formula} is a term of type $\o$.
We assume that $\Sigma$ contains the \emph{logical constants} $\bot$
and $\top$ of type $\o$, $\wedge$, $\vee$, and $\imp$ of type $\o\to
\o\to\o$ and, for every first-order type $\alpha$, $\forall_\alpha$
and $\exists_\alpha$ of type $(\alpha\to \o)\to \o$.
\emph{Atomic} formulas are of the
form $p\ \vec t$ for some predicate symbol $p : \omega$ in
$\Sigma$ where in general, $u\ \vec t$ is an abbreviation for the
application $(\ldots(u\ t_1)\ \ldots\ t_n)$. We abbreviate
$Q_\alpha(\lambda x.C)$, where $Q_\alpha$ is $\forall_\alpha$ or
$\exists_\alpha$, by $\quantx{\alpha}{x}{C}$.
We will also follow the usual convention of writing $\wedge$, $\vee$,
and $\imp$ in infix form in formulas in which they appear as the
top-level logical symbol.

Given variable contexts $\mc X$ and $\mc Y$, a \emph{substitution}
$\theta = [t_1/x_1,\ldots,t_n/x_n]$ of type $\mc Y\to\mc X$ is an
assignment of terms $t_1 : \tau_1,\ldots,t_n:\tau_n$ lying over $\mc
Y$ to each variable $x_1:\tau_1,\ldots,x_n:\tau_n = \mc X$.
Such a substitution is said to be \emph{for} $\mc X$, and its
\emph{range}, denoted by $\range(\theta)$, is  the smallest variable
context containing all the free variables in $t_1,\ldots,t_n$. 
Moreover, it is said to be an \emph{$\mc{X}$-grounding substitution} if $\mc
Y$ is $\emptyset$.
If $t$ is a term lying over $\mc X$, the \emph{application of $\theta$
to $t$}, written as $t\theta$, is the term $(\lambda x_1\ldots \lambda
x_n. t)\ \vec t$; this is evidently a term that lies over $\mc Y$.
Given a variable context $\mc X$, we write $[t/x]_{\mc X}$ to denote the
simultaneous substitution $[x_1/x_1,\ldots,x_n/x_n,t/x] : \mc X\to \mc
X,x$, and we denote the trivial substitution by $\epsilon_{\mc X}:\mc
X\to\mc X$.

We assume a sequent style formulation of derivability for \LDI.
In this context, a \emph{sequent} is a judgment of the form
$\seq{\mc X}{\Gamma}{C}$ for a finite multiset of formulas $\Gamma$
and a formula $C$ all of which lie over a finite variable context $\mc
X$. 
We will assume all the types in $\mc X$ are first order.
We will omit the type annotations in $\mc X$ when their specific
identity is orthogonal to the discussion. 
The formulas in $\Gamma$ are called the \emph{context} or
\emph{assumption set} of the sequent, $C$ is called its
\emph{conclusion} and the variables in $\mc X$ constitute its
\emph{eigenvariables}. 
At an intuitive level, the judgment represented by such a sequent is
valid if, for every $\mc{X}$-grounding substitution $\theta$, the
validity of all the formulas in $\Gamma\theta$ implies the validity of
$C\theta$.

\subsection{Logical Rules}

The main content of a sequent style presentation of a logic are the
rules for deriving sequents.
The rules for \LDI are of three varieties: those that explicate the
meaning of the logical symbols, those that deal with the structural
aspects of sequents,  and those that build in the treatment
of definitions.
We discuss rules of the first two kinds here, leaving the elaboration
of definitions to the next two subsections.

\begin{figure}[!ht]
\begin{mathpar}
  {\inferrule*[right=\allL]
    {\seq{{\mc X}}{\Gamma,C[t/x]_{\mc X}}{D}\\
      \mc X \vdash_\Sigma t : \tau}
    {\seq{{\mc X}}{\Gamma,\typedforallx{\tau}{x}{C}}{D}}}\and
  {\inferrule*[right=\allR]
    {\seq{{\mc X, x}}{\Gamma}{C}}
    {\seq{{\mc X}}{\Gamma}{\typedforallx{\tau}{x}{C}}}}\\
  {\inferrule*[right=\exL]
    {\seq{{\mc X,x }}{\Gamma,C}{D}}
    {\seq{{\mc X}}{\Gamma,\typedexistsx{\tau}{x}{C}}{ D}}}\and
  {\inferrule*[right=\exR]
    {\seq{{\mc X}}{\Gamma}{C[t/x]_{\mc X}}\\
      \mc X\vdash_\Sigma t : \tau}
    {\seq{{\mc X}}{\Gamma}{\typedexistsx{\tau}{x}{C}}}}
\end{mathpar}
\caption{Logical rules for quantifiers}
\label{fig:quantifiers}
\end{figure}

\begin{figure}[!ht]
  \begin{mathpar}
    {\inferrule*[right=\multc]
      {\seq{\mc X}{\Delta_1}{ A_1}
        \qquad\ldots\qquad
        \seq{\mc X}{\Delta_n}{ A_n}
        \and
        \seq{\mc X}{\Gamma,A_1,\ldots,A_n}{ C}}
      {\seq{\mc X}{\Gamma,\Delta_1,\ldots,\Delta_n}{ C}}}\and
    {\inferrule*[right=\axiom]{ }{\seq{\mc X}{A}{A}}}\quad\raisebox{.75em}{$A$\text{ atomic}}
  \end{mathpar}
  \caption{The multicut and axiom rules}
  \label{fig:multicut-axiom}
\end{figure}

The rules corresponding to the logical symbols are identical in
content to the ones to be found in usual first-order logics.
We assume familiarity with the ones that build in the meanings of the
logical connectives.
The quantifier rules are shown in Figure~\ref{fig:quantifiers}.
Note that we write $\Gamma, F$ in these rules to denote a multiset
that comprises $F$ and the formulas in $\Gamma$.
Note also that, in the rules in Figure~\ref{fig:quantifiers}, we
assume that the variable $x$ bound by the quantifier to be distinct
from all the variable in $\mc X$, a requirement that can always be
established by renaming the bound variable.
The structural rules include the usual complement:
contraction, which builds in the treatment of multisets as sets,
weakening, which allows us to add an extraneous assumption,
an axiom rule which allows us to match an atomic conclusion with an
assumption set comprising just that formula, and the cut rule that
underlies the use of lemmas.
We use a particular version of the cut rule called \emph{multicut}
that is better suited to cut-elimination and consistency arguments.
The axiom rule and the multicut rule are shown in
Figure~\ref{fig:multicut-axiom}. 

If an inference rule is a right introduction rule, all of its premises are
called \emph{major} premises. If it is a left introduction rule or a
multicut, then only those premises with the same consequent as its conclusion
are called major premises. All other premises are called \emph{minor} premises.

\subsection{Treating Predicate Constants as Defined Symbols}
\label{sec:definitions}

\LDI deviates from a vanilla first-order logic in that it allows
atomic predicates to be further analyzed through a \emph{definition}
$\mc D$ that parameterizes the logic.
In this context, a definition is a set of clauses of the form
$\fpdfn{p\ \vec t}{{{\mc X}}}{B}$ for which there exists a predicate
constant $p : \omega$ in the signature $\Sigma$ and a formula $B$, with
$\vec t$ and $B$ both lying over $\mc X$.
For a given clause $\fpdfn{H}{{\mc X}}{B}$, we say that $H$ is the
\emph{head} of the clause, and $B$ the \emph{body}.
Similarly to \cite{Tiu12}, we require every variable in $\mc
X$ to appear in $H$, and $H$ to lie in the higher-order pattern
fragment that is described, \eg, in \cite{MN12}.

Definitions provide a natural means for encoding rule-based relational
specifications in \LDI.
For example consider the specification of the \append\ relation
over lists that are constructed using the constants \nil, that
represents the empty list, and \cons, that represents the construction
of a new list by adding an element to an already existing
list.
A typical specification of this relation comprises the following
rules:
  \[
    \inferrule*
        {\qquad}
        {\append\app \nil\app K\app K} \qquad\qquad
    \inferrule*
        {\append\app L\app K\app M}
        {\append\app (\cons\app X\app L)\app K\app
          (\cons\app X\app M)}
  \]

Such a specification can be rendered into a definition through the
following clauses in \LDI:
\begin{equation}
  \label{eq:append-clausal}
  \begin{aligned}
  \append\ \nil\ K\ K &\defn_{K} \top\\
  \append\ (\cons\ X\ L)\ K\ (\cons\ X\ M)&\defn_{X,L,K,M} \append\ L\ K\ M
  \end{aligned}
\end{equation}

Of course, such a rendition is useful only if it is supplemented with 
a means for reflecting the natural style of reasoning associated
with rule-based specifications into \LDI.
There are, in general, two forms in which the rules might be used in
informal reasoning.
First, they may be used to construct derivations for particular
relations; for example, in this particular instance, they may be used
to show that the relation $\append\app (\cons\app a\app \nil)\app
(\cons\app b\app \nil)\app (\cons \app a\app (\cons\app b\app \nil))$
holds for particular constants $a$ and $b$.
Second, when we are given as an assumption that a particular relation
holds, they can figure in a case analysis style of reasoning.
Thus, in this particular instance, they can be used to show that the
formula $(\append\app (\cons\app a\app \nil)\app \nil\app \nil)\imp
\bot$ must hold because there cannot be a derivation for its
antecedent.

\begin{figure}[htb!]
\begin{mathpar}  
  {\inferrule*[right=\DL]
    {\{\seq{\range(\theta)}{\Gamma\theta,B'}{C\theta}\st
      \dfn(\fpdfn{H}{{\mc X}}{B},A,\theta,B'), \fpdfn{H}{{\mc X}}{B}
      \in {\mc D} \}}
    {\seq{{\mc Y}}{\Gamma, A}{C}}}\\
  {\inferrule*[right=\DR]
    {\seq{{\mc Y}}{\Gamma}{B'}}
    {\seq{{\mc Y}}{\Gamma}{A}}\quad \raisebox{1em}{$\dfn(\fpdfn{H}{\mc
        X}{B},A,\epsilon_{\mc Y},B')\ \mbox{\rm for}\ \fpdfn{H}{\mc
        X}{B}\in {\mc D}$}}
\end{mathpar}
\caption{Definition rules for a predicate $p$, provided $A = p\ \vec t$}
\label{fig:defn-rules}
\end{figure}

These two forms of reasoning are built into \LDI by rules for
introducing atomic predicates into the left and right of a sequent
based on the definition $\mc D$ that parameterizes the logic.
These rules are shown in Figure~\ref{fig:defn-rules}.
In these rules, we use the notation $\dfn(\fpdfn{H}{{\mc X}}{B},
A,\theta,B')$ to signify that there exists a substitution
$\rho:\mc Z\to \mc X$, for some $\mc Z$, such that $H\rho = A\theta$
and $B' = B\rho$
Conceptually, \DR corresponds to unfolding a definition for a
particular instance $A$ of the head $H$, whereas \DL corresponds to
case analysis on all possible instances of $A$ matching with the head
$H$.
The reader may confirm that these rules can actually be used to
establish the two formulas considered above.

Not all definitions are permissible in \LDI.
To explain what definitions are allowed, we must describe a process
for assigning an ordinal to each ground
formula $F$ that is called its \emph{level} and is designated by
$\lvl(F)$. 
This process assumes an assignment for each ground atomic formula and
extends it to arbitrary formulas based on the following rules:
\begin{align*}
  &\lvl(\bot) := \lvl(\top) := 0 \qquad&\\
  &\lvl(A\wedge B) := \lvl(A\vee B) := \max(\lvl(A),\lvl(B))&\\
  &\lvl(A\imp B) := \max(\lvl(A) + 1,\lvl(B))&\\
  &\lvl(\typedforallx{\alpha}{x}{C}) := \lvl(\typedexistsx{\alpha}{x}{C}) := \sup\{\lvl(C[t/x]_{\emptyset})\st t\in\ground(\alpha)\}&
\end{align*}
We then say that a definition $\mc D$ is \emph{ground stratified} if
there exists a level assignment to ground atomic formulas such that 
for every clause $H\defn_{\mc X} B\in \mc D$, and for every $\mc
X$-grounding substitution $\rho$, it
is the case that $\lvl(H\rho)\geq \lvl(B\rho)$.
The requirement of definitions in \LDI, then, is that they must be
ground stratified.

There is a stronger version of stratification that is also of interest
and that we identify as \emph{strict stratification}.
In this version, we require the definition to be ground stratified
under a level assignment to ground atomic formulas that depends
only on their predicate head.
In other words it must be the case that $\lvl(p\app \vec{t_1}) =
\lvl(p\app \vec{t_2})$ for any sequence of ground terms $\vec{t_1}$ and
$\vec{t_2}$.
It is actually this less permissive version of stratification that
governs definitions in the logic \G. 

\subsection{Inductive Definitions and Fixed-Point Operators}
\label{sec:induction}

Definitions as we have considered them thus far may contain only one
kind of clause: those of the form $\fpdfn{H}{\mc X}{B}$.
We now introduce the possibility of including a different kind of
clause in definitions, ones that are written as
$\inddfn{H}{\mc X}{B}$.
We refer to these as \emph{inductive clauses} in contrast to the
previously described ones that we distinguish as \emph{fixed-point
clauses}.
The structural restrictions on inductive clauses parallel those on
fixed-point clauses: all the variables in ${\mc X}$ must appear in
$H$ and $H$ must lie in the higher-order pattern fragment.
The predicate symbols in $\Sigma$ are categorized as inductive or
fixed-point predicates.
The definition that parameterizes the logic can be a mixture of the
two kinds of clauses with the proviso that the clauses defining an
inductive predicate must all be of the inductive variety and,
similarly, those for the fixed-point predicates must be of the
fixed-point variety.

Definitions in this mixed form must once again satisfy a stratification
condition.
The condition is, in the first instance, similar to that described
earlier: the definition must be ground stratified under a level
assignment to ground atomic formulas.
This initial assignment is permitted to be arbitrary for ground atomic
formulas that have a fixed-point predicate as the top-level predicate
symbol.
However, the requirement is stricter when the top-level predicate is
an inductive one.
In this case, the level assignment must be independent of the arguments of
the predicate, \ie, the conditions for strict stratification must be
satisfied in these cases.

The conceptual difference between fixed-point and inductive clauses is
that the latter are intended to identify a \emph{least} fixed-point
which thereby entails stronger properties for the predicate.
At the proof-theoretic level, this more refined view of the predicate
definition is realized by a special introduction rule for assumption
formulas that have an inductively defined predicate as their top-level
predicate symbol.
To present this rule we must first render a multi-clause definition of
an inductive predicate $p$ into a single clause that has the form
$\inddfn{p\app \vec{x}}{\vec{x}}{B\app p\app \vec{x}}$,
where $\vec x$ is a sequence of distinct variables and $B$ is a closed
term not containing $p$; $B$ is referred to as a \emph{fixed-point
operator} in such a definition. 
This translation makes use of a special predicate $\eq$ that is
assumed to be defined by the sole clause 
$\fpdfn{\eq\app X\app X}{X}{\top}$.
More specifically, if the clauses for $p$ are the following
\[\inddfn{p\ \vec t_1}{\mc X_1}{B_1} \quad\ldots \quad
  \inddfn{p\ \vec t_n}{\mc X_n}{B_n}\]
then the fixed-point operator in the single clause
definition of $p$ would be
\begin{align*}
  \lambda p. \lambda \vec{x}. &
            (\exists \mc X_1.(\eq\ x_1\ t^1_1)\wedge\ldots\wedge(\eq\ x_k\ t^k_1) \wedge
                      B_1)\vee\ldots\vee\\ 
           &(\exists \mc X_n.(\eq\ x_1\ t^1_n)\wedge\ldots\wedge(\eq\ x_k\ t^k_n)\wedge B_n)
\end{align*}
where $\vec t_i = t_i^1,\ldots,t_i^k$ for each $i$, with $t_i^j$ lying
over $\mc X_i$.

To provide a concrete example of this translation, we may consider the
clauses shown in the display labelled \ref{eq:append-clausal} that
define the \append\ predicate.
Those clauses were originally shown to be of the fixed-point variety,
but we will now assume that their annotation and designation has been
changed to that of inductive clauses.
Letting $B$ be the fixed-point operator
\begin{align*}
    \lambda p.\lambda \ell .\lambda k.\lambda m. & ((\eq\ \ell\ \nil) \wedge (\eq\ k\ m))\vee\\
                     &(\exists x,\ell',m'. (\eq\ \ell\ (\cons\ x\ \ell'))\wedge
    (\eq\ m\ (\cons\ x\ m'))\wedge (p\ \ell'\ k\ m')),
\end{align*}
this definition would be transformed into $\inddfn{\append\app
  L\app K\app M}{L,K,M}{((B\app \append)\app L\app K\app M)}$.

\begin{figure}[ht]
\begin{mathpar}
  {\inferrule*[right=\IL]
    {\vec x; B\ S\ \vec x\vdash S\ \vec x\qquad \mc X; \Gamma,S\ \vec t\vdash C}
    {\mc X; \Gamma, p\ \vec t\vdash C}}\and
  {\inferrule*[right=\IR]
    {\mc X;\Gamma\vdash B\ p\ \vec t}
    {\mc X;\Gamma\vdash p\ \vec t}}
\end{mathpar}
\caption{Rules for introducing $p\app\vec{t}$ after converting the
  clauses for $p$ into the form $\inddfn{p\app\vec{x}}{\vec{x}}{B\ p\ \vec{x}}$} 
\label{fig:ind-rules}
\end{figure}

Assuming the translation that we have just described, the rules for
introducing atomic formulas that have an inductively defined predicate
as their top-level predicate symbol are shown in Figure~\ref{fig:ind-rules}.
The symbol $S$ that appears in the left premise sequent in the \IL
rule is required to be instantiated with a closed term of the same
type as the predicate $p$.
The particular term that is chosen for $S$ in a use of this rule is
referred to as \emph{inductive invariant.}

It is easily seen that the \IR rule, \ie, the rule for introducing an
inductively defined atomic formula in the conclusion of a sequent, is
no different from the rule for doing so when the formula is defined as
simple fixed-point.
However, the \IL rule adds deductive power to the calculus beyond what
is available from just fixed-point definitions.
To substantiate this observation, consider showing the following
statement, which establishes that $\append$ is a functional relation
$$\forall \ell,k,m,m'. (\append\ \ell\ k\ m)\wedge (\append\ \ell\ k\ m')\imp
(\eq\ m\ m').$$
This property cannot be proved if \append\ is
defined via clauses that are given just a fixed-point interpretation.
However, if we view the clauses for \append\ as inductive ones
instead, then the property can be proved via the \IL rule, using 
$$\lambda \ell.\lambda k.\lambda m. \forall m'.(\append\ \ell\ k\ m)\wedge (\append\ \ell\ k\ m')\imp (\eq\ m\ m')$$
as the inductive invariant.

\section{Stratification and
  Consistency}\label{sec:strat-and-consistency} 

In Section~\ref{sec:logic}, we imposed some conditions on the forms of
fixed-point and inductive definitions.
In particular, we required the first to be ground stratified and the
second to be stratified based on just the names of predicates.
In this section, we discuss the purpose of these restrictions.

The need for some kind of stratification for the coherence of
definitions should not be difficult to appreciate.
As an example of a definition that might be problematic, consider one
that has $\fpdfn{\p}{}{\p \imp \bot}$ as the sole clause for a
propositional symbol $\p$.
A clause of this kind has the intuitive content of assuming a
definition of $\p$ in the course of constructing one for $\p$.
To understand this, consider how we might proceed to prove $\p$.
We might unfold this by virtue of the definition into trying to prove
$\p \imp \bot$, which could be done by proving the sequent
$\seq{}{\p}{\bot}$.
Since $\p$ appears as a hypothesis in this sequent, we
are effectively assuming that $\p$ is already a defined propositional
symbol.

The intuitive observation above can be given actual substance by
showing that the sequent $\seq{}{}{\bot}$ can be derived if we
allow $\fpdfn{\p}{}{\p \imp \bot}$ to be the sole clause for $\p$ in a
definition; since there is a derivation for any sequent in which
$\bot$ appears in the assumption set, it follows easily from this that
a logic that permits such a definition is inconsistent.
The crux of the argument is to show that the sequent
$\seq{\cdot}{\p}{\bot}$ would have a derivation in this logic; from
this it follows easily that $\seq{}{}{\p \imp \bot}$ and,
therefore, $\seq{}{}{\p}$ have derivations and then, using
the multicut rule, we may derive $\seq{}{}{\bot}$.
To see that $\seq{}{\p}{\bot}$ has a derivation, we observe that
it would have one if $\seq{}{\p,\p}{\bot}$ is derivable.
A derivation of the last sequent is easily obtained by using the
clause defining $\p$ with one of the two occurrences of $\p$ in the
assumption set of the sequent.

The ground stratification condition that we have described disallows
clauses of the kind just discussed in definitions.
For such a clause to be allowable, we would need to be able to assign
a level to $\p$ that would have the property of being greater than or
equal to the level of $\p \imp \bot$.
However, the definition of levels for formulas makes this impossible
to do. 

In the example considered, ground stratification boils down to a form
of stratification that uses only the predicate name; this is because
the clause in question is for a predicate with no arguments.
When we consider predicates with arguments, the two notions become
distinct. 
In such a case, we may ``build'' the arguments of the predicate into
its name.
Of course, this is only possible for ground instances of predicates,
but that is sufficient for the coherence of the idea: specifically, we
can identify the well-formedness of a definitional clause 
with the ability to describe a stratification ordering on all its
instances. 
This is the idea that is reflected in ground stratification.
As a concrete example, consider a definition that is given by the
following clauses:
\begin{center}
  \begin{tabular}{ccc}
    $\fpdfn{\ev \app \zero}{}{\top}$ & \qquad\qquad
      $\fpdfn{\ev\app (\suc\app X)}{X}{\ev\app X \supset \bot}$
  \end{tabular}
\end{center}
In these clauses, \zero\ and \suc\ are to be understood to be
constants of a designated type \nat\ and $X$ should be read as
a variable of type \nat; intuitively, \ev, which is a
predicate of type $\nat \rightarrow \o$, identifies
the even natural numbers, which are represented by terms
constructed using the constant $\z$ (representing the numeral $0$)
and the function symbol $\suc$ (representing the successor function).
Now, if we were to consider an ordering of ground atomic formulas of
the form $\ev\app t$ that ``forgets'' the argument, the
displayed clauses would not satisfy the stratification requirement.
However, if we allow the chosen ordering to also take into account the
complexity of the argument, as would be the case if the measure
associated with $\ev\app t$ is based solely on the
complexity of $t$ as a term, then it is easily seen that the
definition will satisfy the required condition.

In the above example, the clauses for $\ev$ are assumed to provide
a fixed-point definition for the predicate.
The requirements we have described in Section~\ref{sec:logic} do not
allow them to be treated inductively: for such an interpretation, the
relevant definitional clauses must satisfy the stronger condition of
being stratified based on an ordering that uses only the predicate
name for ground formulas of the form $\ev\app t$.
A natural question to ask is if this requirement on inductive predicates
can be weakened to allow the ordering to depend on the arguments.
What we observe below is that such a weakening is not possible because
it can render the logic inconsistent. 

If we were to weaken the requirement in the way described, it would
allow for the definition of the predicate $\odd$ through 
the following clause:
\[\inddfn{\odd\app (\suc\app X)}{{X}}{\odd\app X\imp \bot}\]
Recast into a form based on a fixed-point operator, this definition
would correspond to one of the form
$\inddfn{\odd\app X}{X}{((B\app \odd)\app X)}$
where $B$ is the term
$\lambdax{p}
          {\lambdax{x}
                   {\existsx{y}
                            {(\eq\app x\app (\suc\ y)
                              \land (p\app y\imp \bot))}}}.$
A crucial observation about this operator is that does not impose the
requirement that the predicate it applies to also holds of $\suc\app \zero$,
\ie, of the representation of the numeral $1$.
Given the definition of \ev, it is therefore not surprising
that both
$\lambdax{x}{\ev\app x}$ and
$\lambdax{x}{(\ev\app x \imp \bot)}$
describe fixed points by virtue of it.
Concretely, it is easily seen that if we pick $S$ to be either of
these predicates then the sequent $\seq{x}{B \app S\app x}{S\app x}$ is
derivable.
Using the induction rule, it then follows that, for any term $t$, we
can construct derivations for
$\seq{{\mc X}}{\odd\app t}{\ev\app t}$ and 
$\seq{{\mc X}}{\odd\app t}{\ev\app t \imp \bot}$ and
therefore, by the multicut rule, a derivation for
$\seq{{\mc X}}{\odd\app t}{\bot}$.
As particular instances of this observation, we see that the sequents
$\seq{}{\odd\app \zero}{\bot}$ and
$\seq{}{\odd\app (\suc\app \zero)}{\bot}$ have derivations. 
The former implies that $\seq{}{}{\odd\app \z \imp \bot}$
has a derivation and therefore, by the clause defining \odd, that
$\seq{}{}{\odd\app (\suc\app \zero)}$ has a
derivation.
Since both $\seq{}{}{\odd\app (\suc\app \zero)}$
and 
$\seq{\cdot}{\odd\app (\suc\app \zero)}{\bot}$ have
derivations, it follows by the multicut rule that
$\seq{\cdot}{\cdot}{\bot}$ must also be derivable.

In the positive direction, the results in this paper show that the
restrictions we have described suffice to yield a consistent logic.
These results strengthen those in \cite{Tiu12} by adding a
treatment of inductive definitions to the logic.
A question to ask is if the conditions on inductive clauses are not
too restrictive in practice.
We argue not.
In particular, we believe that the weaker form of stratification suffice
for general fixed-point definitions and that inductive
reasoning can generally be based on definitions that satisfy stronger
conditions.
In the present context, for example, inductive reasoning can be based
on a definition of a predicate that characterizes natural
numbers---which is easily seen to be stratifiable under an ordering
for atomic formulas that uses only the predicate name---rather than the
definitions of $\ev$ or $\odd$.
\section{Applications for the More Permissive Form for Definitions}\label{sec:examples}

The main difference between the definitions permitted in \LDI and
in \G is in the use of ground stratification as opposed to strict
stratification to determine well-formedness.
Several examples have been presented in~\cite{Tiu12} to illustrate the
benefits of the added richness.
These examples have included the encoding of the G\"odel-Gentzen
translation and the Kolmogorov double negation translation of
classical logic in intuitionistic logic.
Another important class of applications concerns the encoding of
logical relations, which are commonly used to reason about properties
of programs and programming languages.
For example, using the more permissible form of definitions in \LDI,
we can encode logical equivalence of programs that asserts that two
functions are equivalent if their behavior on equivalent arguments is
the same~\cite{Ha16}.
A more elaborate use of logical relations in this kind of reasoning
may be found in \cite{WN16}, where a step-indexed logical equivalence
relation is used to prove correctness of compiler transformations for
functional programs.
Although a different kind of extension to \G, one based on adding
rewriting and using that to encode definitions~\cite{BaNa12}, was used
to validate this application, the use of ground stratification
provides an alternative justification that retains the flavor of
definitions as originally described by McDowell and Miller.

Another common use of logical relations is in the proof of strong
normalizability results in the style of Tait~\cite{Ta67}.
These arguments are based on the identification of reducibility
relations~\cite{GTL89}, whose encoding in a logic of definitions
requires the idea of ground stratification.
We will illustrate this application in more detail below.
We remark that logical relations arguments are
frequently used in conjunction with inductive arguments.
We will therefore observe that the pitfall discussed in the previous
section can be avoided, since the logical relations in question need
not be defined inductively. 

We will specifically show how we can encode the reducibility predicate
commonly used to prove strong normalization for the simply typed
lambda calculus (STLC) using ground stratification.
To do this, we introduce a base type $\ty$ to encode types in STLC,
which may be constructed from the following constant symbols
\begin{mathpar}
  \unit : \ty\and \arr : \ty\to\ty\to\ty
\end{mathpar}
as well as a type $\tm$ to encode terms, which may be constructed
from the following constant symbols
\begin{mathpar}
  \star : \tm\and \lamtm : (\tm\to\tm)\to\tm\and \apptm : \tm\to\tm\to\tm
\end{mathpar}
where $\star$ is the unique term of type $\unit$. We then assume that
we have already defined a predicate $\step : \tm\to\tm\to\o$ specifying
a reduction relation on terms, which we use to inductively define the
strong normalizability predicate $\sn : \tm\to\o$ as follows
$$\sn\ T\ind \typedforallx{\tm}{u}{(\step\ T\ u\imp \sn\ u)}$$
Using this, we can now define the reducibility predicate
$\red : \ty\to\tm\to\o$ as follows
\begin{align*}
  \red\ \unit\ T &\defn \sn\ T\\
  \red\ (\arr\ A\ B)\ T &\defn \typedforallx{\tm}{u}{
                          (\red\ A\ u\imp \red\ B\ (\apptm\ T\ u))}
\end{align*}
Note that this predicate cannot be strictly stratified because $\red$
appears negatively in the body of its definition.
However, we notice that in the instance $\red\ A\ u$ appearing
negatively in the body, the type argument $A$ is a subterm of the type
argument $(\arr\ A\ B)$  appearing in the head of the definition.
But if we think of $\red$ 
as being a definition indexed by its first type argument, then we
can think of $\red\ A$ as having already been defined when $\red\ (\arr\ A\ B)$
is being defined, suggesting the definition be acceptable.
Indeed, this definition is ground stratified because for any ground
terms $a$ and $b$ of type $\ty$, the term $a$ is strictly smaller than
the term $(\arr\ a\ b)$, and thus our definition is valid. 

One may wonder whether this definition suffices for the usual reducibility
argument to go through. Indeed, it had been assumed in a
previous development that it was safe to do induction directly on
the reducibility predicate \cite{AbellaWeb}, which violates the
strict stratification condition we have now imposed on inductive definitions.
The usual argument (see eg. \cite{GTL89}), however, depends on induction on
the {\em type} argument, and not on reducibility itself. We can therefore
model this mode of reasoning by defining a predicate $\type : \ty\to\o$
inductively as follows
\begin{align*}
  \type\ \unit &\ind \top\\
  \type\ (\arr\ A\ B) &\ind (\type\ A)\wedge(\type\ B)
\end{align*}
and then proceed by induction on the type, rather than on
reducibility.
We note that the \type\ predicate does not appear negatively in the
body of these clauses and hence these clauses satisfy the
stratification condition under a level assignment in which the level
of an atomic predicate of the form $(\type\ t)$ is determined solely by
the constant \type. 

\section{Proving Consistency for the Logic}\label{sec:consistency}

In this section, we sketch the proof of consistency for the logic
\LDI.
The most common method for doing this is to show that there cannot be
a derivation for $\bot$ in the logic.
In a sequent style formulation, it is usually easy to show this if
proofs are limited to those that do not use the cut rule.
Thus, a cut-elimination result for the logic comes in handy. 
However, as with the logic \LD described by Tiu, a proof of
cut-elimination for \LDI has been elusive.
Following Tiu, we therefore use an indirect technique.
We first describe a ground logic \LDIinf through a sequent calculus
that follows the structure of the one for 
\LDI except that it is specialized to proving ground sequents.
We then show that cut-elimination holds for \LDIinf and use this to
conclude that there is no derivation for $\bot$ in this logic. 
Finally, we produce an interpretation of \LDI into \LDIinf, which
allows us to conclude that there cannot be a derivation for $\bot$ in
\LDI either.
The subsections below elaborate on these three steps in the proof.

\subsection{The Ground Logic}
\begin{figure}[htb!]
  \begin{mathpar}
    \inferrule*[right=\allR]
    {\left\{\gseq{\Gamma}{C[t/x]_\emptyset}
      \right\}_{t\in\ground(\alpha)}}
    {\gseq{\Gamma}{\typedforallx{\alpha}{x}{C}}}\and
    \inferrule*[right=\exL]
    {\left\{\gseq{\Gamma, B[t/x]_\emptyset}{C}
      \right\}_{t\in\ground(\alpha)}}
    {\gseq{\Gamma, \typedexistsx{\alpha}{x}{B}}{ C}}
  \end{mathpar}
  \caption{\allR and \exL rules in \LDIinf}
  \label{fig:omega-rules}
\end{figure}

We now define a ground version of \LDI which we call \LDIinf. The formulas in
\LDIinf are the formulas in \LDI which are well formed over the empty variable
context. Since all formulas are ground, we restrict the sequent $\seq{\mc X}{\Gamma}{ C}$
to have no eigenvariables, which translates to requiring that $\mc X = \emptyset$.
We therefore denote a ground sequent by $\gseq{\Gamma}{ C}$. The rules of \LDIinf are
the same as those of \LDI except for \allR, \exL, and \muL. We replace the \allR
and \exL rules with the ground infinitary rules in Figure~\ref{fig:omega-rules}.

\begin{figure}[htb!]
  \begin{mathpar}
  {\inferrule*[right=\DL]
    {\{\gseq{\Gamma,B'}{ C}\st \dfn(H\defn_{\mc X} B,A,\epsilon_\emptyset,B')\}}
    {\gseq{\Gamma, A}{ C}}}\and
  {\inferrule*[right=\DR]
    {\gseq{\Gamma}{ B'}}
    {\gseq{\Gamma}{ A}}}\quad\raisebox{1em}{$\dfn(H\defn_{\mc X} B, A, \epsilon_\emptyset,B')$}
\end{mathpar}
\caption{Definition rules in \LDIinf for a predicate $p$, provided $A = p\ \vec t$}
\label{fig:ground-def-rules}
\end{figure}

The definition \DL and \DR rules in \LDIinf are the same as those of \LDI. The
restriction to ground sequents allows us to present these rules as shown in
Figure~\ref{fig:ground-def-rules}. However, we must check that these rules
are well-formed, which amounts to checking that the premises are well-formed
over the empty variable context. In other words, assuming that the conclusion of
the rule is ground, we must check that all the formulas in each of the premises
are ground. Suppose we are given a ground context $\Gamma$, a ground formula $A$,
and a clause $H\defn_{\mc X} B$ such that $\dfn(H\defn_{\mc X}B, A,\epsilon_\emptyset,B')$,
then there exists $\rho : \mc Z\to \mc X$ such that $A = A\epsilon = H\rho$ and
$B\rho = B'$. Since $H$ lies in the higher-order pattern fragment and $A$ is ground,
this implies that $\rho$ is unique and furthermore that it must be ground.
Since $B$ lies over $\mc X$, which is the domain of $\rho$, it follows that $B'$
must be ground, as desired.

\begin{figure}[htb!]
  \begin{mathpar}
    \inferrule*[right=\IL]
    {\left\{{B\ S\ \vec t\vdash S\ \vec t}
      \right\}_{\vec t\in\ground(\vec \alpha)}\\
      {\Gamma, S\ \vec u\vdash C}}
    {\Gamma, p\ \vec u\vdash C}
\end{mathpar}
\caption{\IL rule in \LDIinf for $p\ \vec x\ind_{\vec x} B\ p\ \vec x$ and inductive invariant $S$}
\label{fig:ind-inf-rule}
\end{figure}

Finally, for any inductive definition with fixed-point form $p\ \vec x\ind_{\vec x} B\ p\ \vec x$
and inductive invariant $S$, we introduce a ground infinitary version of the \IL
rule given in Figure~\ref{fig:ind-inf-rule}.

\subsection{Consistency of the Ground Logic}

We now show that \LDIinf is consistent by proving the cut-elimination theorem
for ground derivations. To do so, we first define a {\em reduction} relation,
which specifies how the multicut rule may be propagated upwards in a derivation
tree. Then, we introduce the notion of normalizable derivations, which allows
us to show that the cut reduction relation is well-founded, and from which we
can establish the cut-elimination theorem. To prove that every derivation is
normalizable, we require the intermediate notion of a reducible derivation.
Every derivation is then shown to be reducible via the reducibility technique,
inspired by Tait and Martin-L\"of (see \cite{Ta67,ML71}), and adapted to the
current context by McDowell and Miller in \cite{MDM00}. Our proof closely follows
that of \cite{Tiu12} and \cite{MDM00}, together with the treatment of induction
from \cite{Tiu04,MT04}. We outline the essential aspects of these
steps below.

\subsubsection{Cut reductions}
We specify the reduction relation between derivations,
following closely the reduction relation in \cite{MDM00}. The \emph{redex},
that is the derivation to be reduced, is always a derivation $\Xi$ ending with
the multicut rule
\begin{mathpar}
{\inferrule*[right=\multc]
        {\deriv{\Pi_1}{\gseq{\Delta_1}{B_1}}
        \\ \cdots
        \\ \deriv{\Pi_n}{\gseq{\Delta_n}{ B_n}}
        \\ \deriv{\Pi}{\gseq{B_1,\ldots,B_n,\Gamma}{ C}}
        }
        {\gseq{\Delta_1,\ldots,\Delta_n,\Gamma}{ C}}}
\end{mathpar}
We refer to the formulas $B_1,\dots,B_n$ in the multicut as {\em cut formulas}.
If a left or right rule introduces a cut formula, we say the rule is \emph{principal}.
If $\Pi$ reduces to $\Pi'$, we say that $\Pi'$ is a \emph{reduct} of $\Pi$.

The reduction relation relates a derivation $\Xi$ ending with the multicut
rule to a new derivation $\Xi'$ which may again end with a multicut but is
always of smaller complexity. The reduction relation is specified by case analysis
on the last rule of $\Pi$. If $\Pi$ is principal, then we group the reductions
by case analysis on the derivation $\Pi_i$ ending with the cut formula.
These are {\em essential cases} whenever $\Pi_i$ is also principal (but not \muR),
{\em left-commutative cases} whenever $\Pi_i$ is not principal,
{\em inductive cases} whenever $\Pi$ ends with \muL, {\em structural cases}
whenever $\Pi$ ends with a weakening or contraction, {\em left axiom cases}
whenever $\Pi_i$ is an axiom, and {\em left multicut case} whenever $\Pi_i$ ends
with a multicut. Otherwise, $\Pi$ is either an axiom ({\em right axiom case}),
ends with a multicut ({\em right multicut case}), or ends with a non-principal rule
({\em right-commutative cases}). We describe the inductive case in
more detail below. The full reduction relation is presented in a
longer version of this paper.  

  \case{\em \underline{Inductive case}}
  An inductive case occurs when $\Pi$ ends with \muL introducing the
  cut formula $p\ \vec t$ for an inductive definition in fixed-point form
  $p\ \vec x \ind_{\vec x} B\ p\ \vec x$. Suppose the cut formula being
  introduced is $B_1 = p\ \vec t$. In this case, $\Pi_1$ is a derivation of
  the sequent $\gseq{\Delta_1}{ p\ \vec t}$ and $\Pi$ is the following derivation
  for some inductive invariant $S$
\begin{mathpar}
  {\inferrule*[right=\IL]
    {\left\{\deriv{\Pi^{\vec u}_S}
        {\gseq{B\ S\ \vec u}{ S\ \vec u}}\right\}_{\vec u\in\ground(\vec \alpha)}\\
      \deriv{\Pi'}{\gseq{S\ \vec t,B_2,\ldots,B_n,\Gamma}{ C}}}
    {\gseq{p\ \vec t,B_2,\ldots,B_n,\Gamma}{C}}}
\end{mathpar}
The key idea is that since we know that $S$ is an inductive invariant, or pre-fixed-point of $B$, and $p$ is a least fixed-point of $B$,
then any time $p\ \vec t$ is provable in some context,  $S\ \vec t$
should also be provable in the same context.
More specifically, the family of derivations $\{\Pi^{\vec u}_S\}_{\vec u}$,
which we abbreviate by $\Pi_S$, may be used to obtain a derivation of
$\Delta_1\vdash S\ \vec t$. We call this derivation the \emph{unfolding}
of the derivation $\Pi_1$ with respect to the family $\Pi_S$, which is captured in
the following lemma, and which we denote by $\mu(\Pi_1,\Pi_S)$.

\begin{lemma}[Unfolding lemma]
  Suppose $p\ \vec x\ind B\ p\ \vec x$ is the fixed-point form of
  an inductive definition, then
  for any derivation $\Psi$ of $\Delta\vdash D\ p$ where $p$ does not
  occur in $D$ and occurs only positively in $D\ p$ (\ie, does not occur to
  the left of an implication), there exists a derivation $\mu(\Psi,\Pi_S)$
  of $\Delta \vdash D\ S$.
\end{lemma}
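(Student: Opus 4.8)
The plan is to prove the Unfolding Lemma by induction on the structure of the derivation $\Psi$ of $\Delta \vdash D\ p$, producing in each case a derivation $\mu(\Psi,\Pi_S)$ of $\Delta \vdash D\ S$. The guiding intuition is that $D$ acts as a positive ``context'' built from the connectives $\wedge$, $\vee$, $\imp$ (with $p$ only on the right of implications), quantifiers, and occurrences of $p\ \vec u$, and that whenever $\Psi$ introduces one of these connectives we mirror that introduction, recursively transforming the subderivations; whenever $\Psi$ introduces an occurrence $p\ \vec u$ on the right via \IR --- i.e.\ reduces $p\ \vec u$ to $B\ p\ \vec u$ --- we instead splice in the invariant derivation $\Pi^{\vec u}_S$ together with a multicut to turn a derivation of $\ldots \vdash B\ S\ \vec u$ into one of $\ldots \vdash S\ \vec u$. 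The fact that $p$ occurs only positively in $D\ p$ is exactly what guarantees that every occurrence of $p$ we meet along the way is in a position where this replacement is sound.

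Concretely, I would first make precise the notion of a positive occurrence and set up the induction so that the statement is strong enough to go through: rather than fixing a single $D$, I would phrase the induction hypothesis over all sequents $\Delta \vdash D\ p$ with $p$ positive in $D\ p$ and $p$ not in $D$, and proceed by case analysis on the last rule of $\Psi$. The routine cases are the logical and structural rules whose principal formula is not an occurrence of $p$: for \andR, \orR, \impR, \allR, \exR, \impL and the various left rules, contraction, weakening, and axiom on formulas not mentioning $p$, one simply re-applies the same rule and invokes the IH on each premise, observing that the side formulas still have $p$ only positively (here positivity is used to rule out $p$ appearing to the left of the $\imp$ just introduced by \impR, and to guarantee that applying \impL does not create a negative occurrence). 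The interesting case is when the last rule of $\Psi$ is \IR acting on an occurrence $p\ \vec u$ inside $D$; there $\Psi$ has a premise $\Delta \vdash E\ p$ where $E\ p$ results from $D\ p$ by unfolding that one occurrence to $B\ p\ \vec u$, so $p$ is still positive in $E\ p$, the IH yields a derivation of $\Delta \vdash E\ S$, and a single multicut against the appropriate invariant derivation $\Pi^{\vec u}_S$ (after possibly instantiating it) converts this into $\Delta \vdash D\ S$.

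The main obstacle I expect is bookkeeping around occurrences: $D\ p$ may contain several occurrences of $p$, and a rule applied by $\Psi$ may act ``inside'' one occurrence's unfolding while leaving the others alone, or contraction/weakening may duplicate or discard subformulas containing $p$. To handle this cleanly I would define the transformation not just on $\Psi$ but track which syntactic occurrences of $p$ in the end-sequent's conclusion are ``active'', and show the replacement commutes with each rule; alternatively, one packages the whole argument by a substitution-like operation and appeals to a more general lemma about positive contexts. A secondary subtlety is that the invariant derivations $\Pi^{\vec u}_S$ in the ground logic \LDIinf are indexed over all $\vec u \in \ground(\vec\alpha)$, so when we need one at a particular $\vec u$ we just select the corresponding member of the family; in the non-ground setting one would additionally need to instantiate $B\ S\ \vec x \vdash S\ \vec x$ by a grounding substitution, which is where the ground-logic formulation simplifies matters. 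Finally, one must check termination of the recursive construction --- it proceeds by structural recursion on $\Psi$ (with the spliced-in $\Pi^{\vec u}_S$ not feeding back into the recursion), so well-foundedness is immediate.
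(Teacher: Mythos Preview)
Your overall strategy---induction on $\Psi$ with case analysis on the last rule, mirroring each rule and splicing in $\Pi^{\vec u}_S$ at \IR steps---is the intended one (the paper defers the full argument to a longer version). But there is a genuine gap at exactly the point the paper flags as ``crucial'': in the \IR case you assert that after unfolding, ``$p$ is still positive in $E\ p$'', without justification. This is where \emph{strict} stratification of $p$ is needed and where the lemma would fail under mere ground stratification. Concretely, \IR applies only when the whole consequent is $p\ \vec u$, so $D\ p = p\ \vec u$ and the premise is $\Delta \vdash B\ p\ \vec u$; to invoke the induction hypothesis with $D' = \lambda r.\, B\ r\ \vec u$ you must know that $p$ occurs only positively in $B\ p\ \vec u$. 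Under strict stratification every ground instance $p\ \vec s$ has the same level $\ell$, so a negative occurrence of $p$ in the body would force $\lvl(B\ p\ \vec u) \geq \ell+1 > \ell = \lvl(p\ \vec u)$, contradicting the stratification inequality. Under ground stratification alone this fails---the $\odd$ example of Section~\ref{sec:strat-and-consistency} has $p$ negative in its body---and the recursion breaks. You should make this step explicit; it is the heart of the lemma.

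Two smaller corrections. First, your phrasing ``\IR acting on an occurrence $p\ \vec u$ inside $D$'' is off: in sequent calculus \IR is applicable only when the consequent is literally $p\ \vec u$, so there is no ``inside''. The reduction is then: IH gives $\Delta \vdash B\ S\ \vec u$, and one multicut against $\Pi^{\vec u}_S$ yields $\Delta \vdash S\ \vec u = D\ S$. Second, for rules with minor premises whose consequent is not $D\ p$---notably \impL, \IL, and the left cut-premises of \multc---do \emph{not} apply the induction hypothesis there (those consequents need not have $p$ positive and require no transformation); keep them unchanged and recurse only on the major premise. You also omit the axiom case when the formula \emph{is} $p\ \vec u$: there one must rebuild $p\ \vec u \vdash S\ \vec u$ via \IL with invariant $S$, using $\Pi_S$ for the minor premises and a derivation of $S\ \vec u \vdash S\ \vec u$ for the major one.
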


The proof of this lemma depends crucially on the inductive definition $p$
to be strictly stratified. Using the unfolding operation, we can now
reduce the redex $\Xi$ above to the following 
\begin{mathpar}
  {\inferrule*[right=\multc]
    {{\deriv{\mu(\Pi_1,\Pi_S)}{\gseq{\Delta_1}{S\ \vec t}}}\\
      \ldots\\ 
      {\deriv{\Pi'}{\gseq{S\ \vec t,B_2,\ldots,B_n,\Gamma}{ C}}}}
    {\gseq{\Delta_1,\ldots,\Delta_n, \Gamma}{C}}}
\end{mathpar}

\subsubsection{Cut Elimination}

Our aim is now to show how the cut reduction relation is well-founded, and can therefore be used to eliminate cuts from a derivation. 
We start with the following inductive definition, by which we
mean the smallest set of derivations closed under the specified operation
(see definition 1.1.1 in \cite{A77} for a precise definition).

\begin{definition}[normalizability]
  The set of normalizable derivations in \LDIinf is inductively defined as follows:
  \begin{enumerate}
  \item if $\Pi$ ends with a multicut, then $\Pi$ is normalizable if
    every reduct $\Pi'$ is normalizable
  \item otherwise, $\Pi$ is normalizable if each of its premise derivations is normalizable.
  \end{enumerate}
\end{definition}

The goal is to show that every derivation in \LDIinf is normalizable.
To do this we need the intermediate notion of $\gamma$-reducibility,
which is defined by transfinite recursion on the level $\gamma$ of a derivation,
which we recall is an ordinal. The \emph{level} of a derivation of a ground sequent
$\gseq{\Gamma}{ C}$ is defined to be $\lvl(C)$. Note therefore that if $\Pi$
reduces to $\Pi'$, then $\lvl(\Pi) = \lvl(\Pi')$. Furthermore, note that
levels are defined so that every rule has a non-increasing consequent in each
of its major premises. 
In the case of the \DR and \muR rules, this condition is
guaranteed by the ground stratification condition. In the case of \DL,
this condition is guaranteed because the consequent is ground.
Finally we note that below,
$\gamma$-reducibility of a derivation ending with the \impR rule depends
on $\alpha$-reducibility to already be defined for any $\alpha < \gamma$.
Together, these observations guarantee that the following is well defined.

\begin{definition}[reducibility]
  Define $\gamma$-reducibility of a derivation $\Pi$ inductively as follows:
  \begin{enumerate}
  \item if $\Pi$ ends with a multicut, then $\Pi$ is $\gamma$-reducible if
    for every reduct $\Pi'$, $\Pi'$ is $\gamma$-reducible. 
  \item if $\Pi$ is the derivation
    $$\inferrule*[right=\impR]
    {{\deriv{\Pi'}{\gseq{\Gamma,A}{ B}}}}
    {\gseq{\Gamma}{ A\imp B}},$$
    then $\Pi$ is $\gamma$-reducible if $\lvl(\Pi)\leq \gamma$,
    $\Pi'$ is $\gamma$-reducible, and for every $\alpha$-reducible
    $\deriv{\Upsilon}{\Delta\vdash A}$ where $\alpha = \lvl(A)$,
    the derivation $\multcut(\Upsilon,\Pi')$
    is $\gamma$-reducible.
  \item if $\Pi$ ends with any other rule, $\Pi$ is $\gamma$-reducible if each
    of its major premise derivations is $\gamma$-reducible, and each of its
    minor premise derivations is normalizable.
  \end{enumerate}
\end{definition}

We say a derivation $\Pi$ is \emph{reducible} if there exists a $\gamma$
such that $\Pi$ is $\gamma$-reducible. The lemma below asserts that
reducibility implies normalizability. Since reducibility is a strengthening
of normalizability, it is shown by straightforward induction on the
$\gamma$-reducibility of a derivation.
\begin{lemma}[Normalization Lemma]
  \label{normalization-lemma}
  If $\Xi$ is $\gamma$-reducible then $\Xi$ is normalizable.
\end{lemma}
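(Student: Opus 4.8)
The plan is to prove the Normalization Lemma by transfinite induction on $\gamma$, with an inner induction on the structure of the $\gamma$-reducibility derivation (equivalently, on the fact that $\Xi$ is $\gamma$-reducible, viewed as a member of the inductively defined set). The goal is to show that each of the three clauses in the definition of $\gamma$-reducibility supplies exactly what clause (1) or (2) of normalizability requires; the only subtlety is the \impR case, which is why the outer induction on $\gamma$ is needed.

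First I would handle the case where $\Xi$ ends with a multicut. By clause (1) of reducibility, every reduct $\Xi'$ of $\Xi$ is $\gamma$-reducible, and since each such $\Xi'$ is a structurally smaller witness of reducibility (the reducts are what the inductive definition closes over), the inner induction hypothesis gives that every reduct is normalizable; this is precisely clause (1) of normalizability. Next, the case where $\Xi$ ends with \impR. Here clause (2) of reducibility tells us $\Pi'$ (the premise derivation of $\seq{}{\Gamma,A}{B}$) is $\gamma$-reducible, so by the inner induction hypothesis $\Pi'$ is normalizable; since \impR is a right rule with only the single major premise $\Pi'$, and there are no minor premises, that already suffices for clause (2) of normalizability. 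The extra data in reducibility clause (2) — that $\multcut(\Upsilon,\Pi')$ is $\gamma$-reducible for every lower-level reducible $\Upsilon$ — is simply not needed for normalizability here; it is carried along only for the later reducibility-of-all-derivations argument. Finally, the case where $\Xi$ ends with any other rule: clause (3) of reducibility says every major premise derivation is $\gamma$-reducible (apply the inner induction hypothesis to get normalizability) and every minor premise derivation is already normalizable by fiat; together these give clause (2) of normalizability directly.

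I should make one structural remark so the induction is visibly well-founded. The statement to prove is ``for all $\gamma$, if $\Xi$ is $\gamma$-reducible then $\Xi$ is normalizable,'' and the induction is on the derivation that $\Xi$ is $\gamma$-reducible, using the outer induction on $\gamma$ only in the \impR case if one wants to invoke facts about $\alpha$-reducibility for $\alpha<\gamma$. In fact, as the case analysis above shows, even the \impR case needs only the inner induction hypothesis applied to $\Pi'$ at the same $\gamma$, so the argument goes through by a single induction on the $\gamma$-reducibility of $\Xi$. Either way, the key observation making everything routine is that $\gamma$-reducibility is a \emph{strengthening} of normalizability clause by clause: wherever normalizability demands a sub-derivation be normalizable, reducibility has already demanded it be reducible (major premises, multicut reducts, the \impR premise), and wherever reducibility only demands normalizability (minor premises), that is exactly what normalizability demands too.

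I do not expect any real obstacle here; this is the ``straightforward induction'' the text advertises. The only thing to be careful about is bookkeeping in the \impR case — making sure one does not accidentally try to use the induction hypothesis at a strictly smaller $\gamma$ on a derivation that is only known to be $\gamma$-reducible, and conversely making sure the definition's well-foundedness clause (the $\alpha$-reducibility-before-$\gamma$-reducibility stratification noted just before the definition) is respected. Once that is set up correctly, each of the three cases closes in a line.
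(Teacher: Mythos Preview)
Your proposal is correct and matches the paper's approach exactly: the paper also proves this by a single straightforward induction on the $\gamma$-reducibility of $\Xi$, relying on the observation you make that reducibility is a clause-by-clause strengthening of normalizability. Your eventual remark that the outer transfinite induction on $\gamma$ is unnecessary (since even the \impR case only needs the inner hypothesis at the same $\gamma$) is precisely in line with the paper's one-line justification.
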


The key lemma needed to prove normalizability of derivations in \LDIinf is the following
\begin{lemma}[Reducibility Lemma]
  \label{reducibility-lemma}
  Given $n$ derivations $\Pi_1,\ldots,\Pi_n$ such that $\Pi_i$ is $\gamma_i$-reducible,
  and any derivation $\Pi$ the multicut
  \begin{mathpar}
    {\inferrule*[right=\multc]
      {\deriv{\Pi_1}{\gseq{\Delta_1}{ B_1}}\\\ldots\\
        \deriv{\Pi_n}{\gseq{\Delta_n}{ B_n}}\\
      \deriv{\Pi}{\gseq{\Gamma,B_1,\ldots,B_n}{ C}}}
      {\gseq{\Delta_1,\ldots,\Delta_n,\Gamma}{ C}}
    }
  \end{mathpar}
  which we denote by $\Xi$, is reducible.
\end{lemma}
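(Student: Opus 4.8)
The plan is to prove the Reducibility Lemma by a principal induction on the multiset of levels $\{\gamma_1,\ldots,\gamma_n\}$ — more precisely on the ordinal $\gamma = \max(\gamma_1,\ldots,\gamma_n,\lvl(\Xi))$ — with a subordinate induction on the structure of the derivations $\Pi_1,\ldots,\Pi_n,\Pi$. This is the standard McDowell–Miller/Tiu argument, and the bulk of the work is a long case analysis following the specification of the cut reduction relation that was sketched just above the statement. Recall that the goal, by definition of reducibility, is to exhibit a $\delta$ such that $\Xi$ is $\delta$-reducible; since $\Xi$ ends in a multicut, this amounts to showing every reduct $\Xi'$ of $\Xi$ is $\delta$-reducible, and then invoking the induction hypothesis on the (smaller) reducts.

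First I would dispose of the base and commutative cases. If some $\Pi_i$ or $\Pi$ ends in a rule that does not introduce a cut formula, the reduct permutes the multicut upward past that rule; each resulting multicut has premise derivations that are still reducible (sub-derivations of reducible derivations are reducible, by clause (3) of the reducibility definition and the Normalization Lemma for the minor premises) and is smaller in the subordinate ordering, so the induction hypothesis applies. The axiom cases and the nested-multicut cases are handled as in \cite{MDM00}: an axiom cut formula lets us simply delete a branch, and a nested multicut is flattened, in both cases strictly decreasing the measure. The essential cases — where $\Pi_i$ and $\Pi$ are both principal on the same cut formula $B_i$ — are where the level actually drops: for $B_i = A \imp B$ the reduct is built using clause (2) of the reducibility definition, which is exactly why that clause was phrased with the auxiliary $\multcut(\Upsilon,\Pi')$ condition; for $B_i$ an atom unfolded by \DL/\DR, the new cut formula $B'$ satisfies $\lvl(B') \le \lvl(B_i)$ by ground stratification, so the principal induction hypothesis at the strictly smaller level applies.

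The delicate case, and the one genuinely new relative to \cite{Tiu12}, is the \emph{inductive case}, where $\Pi$ ends in \muL on a cut formula $p\,\vec t$ with inductive invariant $S$. Here the reduct replaces $\Pi_1$ by its unfolding $\mu(\Pi_1,\Pi_S)$ and cuts $S\,\vec t$ against $\Pi'$. To push the induction through I need two things: that $\mu(\Pi_1,\Pi_S)$ is $\gamma_1$-reducible, and that the new multicut is smaller in the subordinate order. The first is the crux: I expect to prove, by induction on the structure of $\Pi_1$ (simultaneously with, or as a strengthening of, the statement of the Unfolding Lemma), that unfolding preserves reducibility — this is where strict stratification of $p$ is indispensable, because it guarantees $\lvl(p\,\vec u)$ is a fixed ordinal independent of $\vec u$, so the family $\{\Pi^{\vec u}_S\}$ and the unfolding operation do not disturb levels and the recursion on $\Pi_1$ stays within a single level stratum; with only ground stratification the level of $p\,\vec u$ could vary with $\vec u$ and the argument would break (consistent with the inconsistency exhibited in Section~\ref{sec:strat-and-consistency}). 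Granting reducibility of the unfolding, the reduct is a multicut whose cut-formula levels are $\gamma_1,\ldots,\gamma_n$ (unchanged, since $\lvl(S\,\vec t)=\lvl(p\,\vec t)$ as $S$ has the type of $p$ and $p$ is strictly stratified) but whose derivation $\Pi'$ is a strict sub-derivation of $\Pi$, so the subordinate induction hypothesis closes the case.

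The main obstacle, then, is establishing that the unfolding operation $\mu(-,\Pi_S)$ carries $\gamma$-reducible derivations to $\gamma$-reducible derivations — equivalently, proving a reducibility-preserving version of the Unfolding Lemma. Everything else is a faithful, if lengthy, transcription of the McDowell–Miller reducibility proof; the interaction of the \muL rule with the multicut measure is the only place where the particular stratification discipline of \LDI — ground stratification for fixed-point predicates, strict stratification for inductive ones — is load-bearing, and it is precisely that discipline which makes the unfolding argument go through.
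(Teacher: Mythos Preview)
There is a genuine gap in the inductive (\IL) case, and it is exactly the point where your induction scheme diverges from the paper's. You claim that after the \IL reduction the cut-formula levels are unchanged because ``$\lvl(S\ \vec t)=\lvl(p\ \vec t)$ as $S$ has the type of $p$ and $p$ is strictly stratified.'' This is false: $S$ is an \emph{arbitrary} closed term of the same type as $p$, and $\lvl(S\ \vec t)$ is computed from the logical structure of the formula $S\ \vec t$, not from the level assigned to $p$. For instance, in the \append\ example the invariant $S = \lambda \ell.\lambda k.\lambda m.\ \forall m'.\ (\append\ \ell\ k\ m)\wedge(\append\ \ell\ k\ m')\imp(\eq\ m\ m')$ yields $\lvl(S\ \vec t) \geq \lvl(\append\ \vec t)+1 > \lvl(\append\ \vec t)$. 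Consequently the reducibility ordinal $\gamma'$ you would need for $\mu(\Pi_1,\Pi_S)$ can exceed $\gamma_1$, so your principal measure $\max(\gamma_1,\ldots,\gamma_n,\lvl(\Xi))$ can strictly \emph{increase} in the reduct. And since $\mu(\Pi_1,\Pi_S)$ is in general structurally larger than $\Pi_1$ (it splices the whole family $\Pi_S$ into $\Pi_1$), your subordinate structural induction does not save you either.

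The paper's remedy is to change the outermost induction measure: following \cite{Tiu04}, the proof is by induction first on (1) the number of occurrences of \IL in $\Pi$, then on (2) the height of $\Pi$, then on (3) the number $n$ of cut formulas, and finally on (4) the $\gamma_i$-reducibility of the $\Pi_i$. In the \IL reduction $\Pi$ is replaced by its major premise $\Pi'$, which contains strictly fewer \IL occurrences, so measure (1) drops and the induction hypothesis applies \emph{regardless} of how large $\lvl(S\ \vec t)$ or the reducibility ordinal of $\mu(\Pi_1,\Pi_S)$ turn out to be. Your diagnosis that a reducibility-preserving version of the Unfolding Lemma is needed, and that strict stratification of $p$ is what makes it go through, is correct; but that lemma only furnishes reducibility of $\mu(\Pi_1,\Pi_S)$ at \emph{some} ordinal, and it is the \IL-count measure---not a level bound---that lets you feed this back into the main induction.
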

Following \cite{Tiu04}, the proof is by induction on (1) the number of
occurrences of \IL in $\Pi$, then by subordinate induction on (2) the height of $\Pi$, then on (3) the number of cut formulas $n$, and finally on (4) the $\gamma_i$-reducibility of each derivation $\Pi_i$.

We now proceed to show how the multicut rule may be eliminated from any \LDIinf derivation.
\begin{corollary}
  \label{reducibility-corollary}
  Every derivation $\Pi$ in \LDIinf is reducible.
\end{corollary}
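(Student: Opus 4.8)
The plan is to establish the corollary by a routine induction on the structure of the (well-founded, possibly infinitely branching) derivation $\Pi$, calling on the Reducibility Lemma whenever a cut is involved and on the induction hypothesis otherwise. If $\Pi$ is an instance of \axiom it has no premises and is therefore vacuously $\gamma$-reducible for every $\gamma$ by clause~3 of the definition of reducibility, hence reducible. If $\Pi$ ends with a rule that is neither a multicut nor \impR, the induction hypothesis tells us each premise derivation is reducible; the minor premises are then normalizable by the Normalization Lemma, and each major premise $\Pi_i$ is $\gamma_i$-reducible for some $\gamma_i$. Taking $\gamma$ to be an ordinal dominating every $\gamma_i$ together with $\lvl(\Pi)$, and using the elementary observation that $\gamma$-reducibility persists when $\gamma$ is enlarged, we obtain that $\Pi$ is $\gamma$-reducible.

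The multicut case is precisely the scenario of the Reducibility Lemma: $\Pi$ already has the shape of the redex $\Xi$ there, with cut-formula derivations $\Pi_1,\ldots,\Pi_n$ and a final premise deriving $\gseq{\Gamma,B_1,\ldots,B_n}{C}$. The induction hypothesis makes each $\Pi_i$ reducible, hence $\gamma_i$-reducible for some $\gamma_i$, and the Reducibility Lemma then delivers reducibility of $\Xi=\Pi$ outright; note that the lemma asks nothing of the final premise, so the induction hypothesis is used only on the $\Pi_i$.

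The \impR case blends the two preceding ones. Suppose $\Pi$ ends with \impR from a premise $\Pi'$ of $\gseq{\Gamma,A}{B}$, which is $\gamma'$-reducible by the induction hypothesis. For any $\alpha$-reducible $\Upsilon$ deriving $\Delta\vdash A$ with $\alpha=\lvl(A)$, the Reducibility Lemma with the single cut formula $A$ shows $\multcut(\Upsilon,\Pi')$ is reducible, and the witnessing ordinal can be taken uniformly, depending only on $\lvl(A)$, $\lvl(B)$ and $\gamma'$ rather than on $\Upsilon$ itself. Choosing $\gamma$ above $\lvl(\Pi)=\max(\lvl(A)+1,\lvl(B))$, above $\gamma'$, and above that uniform ordinal, and lifting $\Pi'$ and every $\multcut(\Upsilon,\Pi')$ to $\gamma$-reducibility by monotonicity, the three conditions of clause~2 of the definition of $\gamma$-reducibility are met, so $\Pi$ is $\gamma$-reducible.

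All of the proof-theoretic substance has been absorbed into the Reducibility Lemma, so the point requiring care here is purely the ordinal bookkeeping: one must check that a single $\gamma$ can be chosen to dominate $\lvl(\Pi)$ and all the witnesses returned by the induction hypothesis and by the Reducibility Lemma---which, for the \LDIinf rules with infinitely many major premises, means taking suprema over those families---and that $\gamma$-reducibility is monotone in $\gamma$ so those witnesses can be raised to the common $\gamma$. Both facts are standard, handled as in \cite{MDM00,Tiu12}; with them in hand the induction closes mechanically, and the Normalization Lemma together with this corollary then gives that every \LDIinf derivation is normalizable, from which cut-elimination for \LDIinf follows.
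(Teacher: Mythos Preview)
Your structural induction is a genuinely different route from the paper's, and it is doing far more work than necessary. The paper's argument is two lines: apply the Reducibility Lemma in the degenerate case $n=0$, where the hypotheses on the $\Pi_i$ are vacuous and the lemma places no condition on the rightmost premise, to conclude that the nullary multicut $\multcut(\Pi)$ is reducible; then observe that $\multcut(\Pi)$ reduces to $\Pi$, so by clause~1 of the definition $\Pi$ itself is reducible at the same level. No induction on $\Pi$, no monotonicity in $\gamma$, no suprema over infinite premise families.

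Your approach, by contrast, requires two auxiliary facts you do not prove. Monotonicity of $\gamma$-reducibility in $\gamma$ is indeed routine. But in the \impR case you also need that the witnessing ordinal the Reducibility Lemma produces for $\multcut(\Upsilon,\Pi')$ can be chosen \emph{uniformly} in $\Upsilon$. The lemma as stated in the paper only asserts reducibility, i.e.\ existence of \emph{some} witness per instance; since the $\alpha$-reducible $\Upsilon$ form a proper class, you cannot simply take a supremum of those witnesses. To justify uniformity you would have to reopen the proof of the Reducibility Lemma and extract that the witness can always be taken to be, say, $\lvl(B)$. That is plausible in McDowell--Miller--Tiu style developments, but it is an unproved strengthening here, so as written your \impR case has a gap. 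The nullary-multicut trick sidesteps all of this bookkeeping entirely.
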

\begin{proof}
  Consider the nullary multicut $\multcut(\Pi)$, which is reducible
  by Lemma~\ref{reducibility-lemma}. Since $\multcut(\Pi)$ reduces to $\Pi$
  it follows that $\Pi$ is reducible.
\end{proof}

\begin{lemma}[Normal form lemma]
  \label{normal-form-lemma}
  If a derivation $\Pi$ of a sequent $\Gamma\vdash C$ is normalizable,
  then there exists a cut-free derivation $\hat\Pi$ of $\gseq{\Gamma}{ C}$,
  which we call a \emph{normal form} for $\Pi$.
\end{lemma}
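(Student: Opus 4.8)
The plan is to prove this by induction on the derivation witnessing that $\Pi$ is normalizable --- that is, on the structure of the inductive definition of the set of normalizable derivations --- with a case analysis on the last rule of $\Pi$. First I would dispatch the case in which $\Pi$ ends with a rule $r$ other than multicut, with premise derivations $\{\Pi_i\}_{i\in I}$; here $I$ may be infinite (as for the $\omega$-style \allR and \exL rules and the infinitary \IL rule) or empty (as for \axiom). By clause~(2) of the definition of normalizability each $\Pi_i$ is normalizable, and each is a predecessor of $\Pi$ in the inductive ordering, so the induction hypothesis yields a cut-free derivation $\hat\Pi_i$ of the same end sequent as $\Pi_i$. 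Re-applying $r$ to the family $\{\hat\Pi_i\}_{i\in I}$ then produces a derivation $\hat\Pi$ of $\gseq{\Gamma}{C}$ containing no multicut, hence cut-free; the \axiom rule has no premises and there $\hat\Pi = \Pi$ already works.

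The remaining case is when $\Pi$ ends with a multicut. The point to exploit here is that the reduction relation is total on multicut-ending derivations: a nullary multicut reduces to its single premise, and for $n \geq 1$ the reduction is specified by an exhaustive case analysis on whether the right premise $\Pi$ is principal and, accordingly, on the last rule of $\Pi$ or of the relevant premise $\Pi_i$. Thus $\Pi$ has at least one reduct $\Pi'$. By clause~(1), $\Pi'$ is normalizable, and it is a predecessor of $\Pi$ in the inductive ordering; since cut reductions preserve the end sequent of the redex, $\Pi'$ also derives $\gseq{\Gamma}{C}$. Applying the induction hypothesis to $\Pi'$ therefore yields a cut-free derivation of $\gseq{\Gamma}{C}$, which we take to be $\hat\Pi$.

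The routine verifications needed to make this go through are that every cut reduction preserves the end sequent of the redex and that the case analysis defining the reduction relation is genuinely exhaustive, so that a multicut-ending derivation always has a reduct (otherwise clause~(1) would make it vacuously normalizable with no cut-free witness available). The one place that calls for slight care is the first case when $I$ is infinite: one must check that the inductive ordering on normalizability derivations permits invoking the hypothesis on the whole premise family at once, and that re-assembling an infinitary rule from the resulting cut-free family is itself a legitimate \LDIinf derivation. Given the way \LDIinf and its reduction relation are set up, neither of these is a real obstacle, so I expect the proof to be short.
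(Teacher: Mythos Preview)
Your proposal is correct and matches the standard argument that the paper's setup is designed to support: induction on the inductive definition of normalizability, with the non-multicut case handled by recursively normalizing the premise derivations and reapplying the rule, and the multicut case handled by passing to a reduct. Your explicit observation that the reduction relation must be total on multicut-ending derivations (so that clause~(1) is never vacuously satisfied) is exactly the point that makes the multicut case go through, and your remarks about end-sequent preservation under reduction and about the infinitary-premise rules are the right side conditions to check.
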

Since every derivation in \LDIinf is reducible by Corollary~\ref{reducibility-corollary},
reducibility implies normalizability by Lemma~\ref{normalization-lemma},
and a normalizable derivation has a cut-free normal form by the previous lemma,
we obtain our main theorem
\begin{theorem}[Cut admissibility for \LDIinf]
  \label{main-theorem}
  Every derivation of a ground sequent $\gseq{\Gamma}{ C}$ in \LDIinf admits
  a cut-free derivation of the same sequent.
\end{theorem}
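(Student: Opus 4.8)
The plan is to read the statement off as the terminal link in the chain of results already in place. Given any derivation $\Pi$ of a ground sequent $\gseq{\Gamma}{C}$ in \LDIinf, Corollary~\ref{reducibility-corollary} gives that $\Pi$ is reducible, \ie, $\gamma$-reducible for some ordinal $\gamma$; the Normalization Lemma (Lemma~\ref{normalization-lemma}) upgrades this to normalizability of $\Pi$; and the Normal Form Lemma (Lemma~\ref{normal-form-lemma}) extracts from a normalizable derivation a cut-free derivation $\hat\Pi$ of the very same sequent $\gseq{\Gamma}{C}$. So the proof of the theorem itself is a three-step concatenation with no new ideas; all of the real work has been pushed into the supporting lemmas.

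If I were to carry the development out in full, that work would concentrate on the Reducibility Lemma (Lemma~\ref{reducibility-lemma}), which I would prove, as indicated, by a lexicographic induction: outermost on the number of \IL occurrences in the rightmost derivation $\Pi$, then on the height of $\Pi$, then on the number $n$ of cut formulas, and innermost on the $\gamma_i$-reducibilities of the side derivations. The bulk of the cases --- essential cuts, left- and right-commutative cases, structural cases (weakening and contraction), the axiom cases, and nested multicuts --- follow the McDowell--Miller template: each cut reduction either removes a cut formula, strictly lowers the height, or drops one of the reducibility ordinals, so some subordinate measure decreases and the induction hypothesis applies. The \impR case is the one forcing $\gamma$-reducibility to be defined by transfinite recursion on $\gamma$, with the needed inequality $\lvl(A) < \gamma$ guaranteed by the ground-stratification condition on \DR and \muR.

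The hard part, and the place where the argument is genuinely sensitive to the shape of \LDI, is the inductive case: $\Pi$ ends with \IL introducing a cut formula $p\ \vec t$ for an inductive predicate in fixed-point form. The reduction replaces the side derivation $\Pi_1$ by its unfolding $\mu(\Pi_1,\Pi_S)$, whose existence and conclusion $\gseq{\Delta_1}{S\ \vec t}$ come from the Unfolding Lemma. Two points are delicate. First, this unfolding can increase the height of the derivation and, transitively, the number of cuts, so only the count of \IL occurrences is guaranteed to decrease --- which is exactly why that count must be the outermost induction parameter, and why the reduct is arranged so that its rightmost derivation is $\Pi'$ (with one fewer \IL). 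Second, and more fundamentally, the Unfolding Lemma relies essentially on the inductive predicate $p$ being \emph{strictly} stratified, so that levels are insensitive to the arguments of $p$ and the recursion building $\mu(\Pi_1,\Pi_S)$ through the family $\Pi_S$ is well founded; this is the formal reason strict stratification cannot be relaxed to ground stratification for inductive predicates, the inconsistency exhibited in Section~\ref{sec:strat-and-consistency} being the concrete symptom of its failure. I would expect the Unfolding Lemma, together with the verification that $\mu(\Pi_1,\Pi_S)$ remains reducible, to be the main obstacle in the whole proof.
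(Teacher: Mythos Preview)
Your proposal is correct and matches the paper's own proof: the theorem is obtained by chaining Corollary~\ref{reducibility-corollary} (every derivation is reducible), Lemma~\ref{normalization-lemma} (reducible implies normalizable), and Lemma~\ref{normal-form-lemma} (normalizable implies a cut-free normal form exists). Your commentary on where the real work lies---the Reducibility Lemma with its lexicographic induction, the inductive case requiring the Unfolding Lemma, and the reliance on strict stratification for inductive predicates---accurately reflects the structure the paper sketches.
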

We now obtain the following important consequence of the cut elimination theorem.
\begin{corollary}[Consistency of \LDIinf]
  \label{ground-consistency}
  \LDIinf is consistent.
\end{corollary}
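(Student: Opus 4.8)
The plan is to obtain consistency as an immediate consequence of the cut-admissibility result, Theorem~\ref{main-theorem}. Here ``\LDIinf is consistent'' is taken to mean that there is no derivation of the sequent $\gseq{}{\bot}$; this is the pertinent notion, because using the \botL rule together with weakening we get $\gseq{\Gamma,\bot}{C}$ for every ground sequent $\gseq{\Gamma}{C}$, so a derivation of $\gseq{}{\bot}$ combined with a multicut would render every ground sequent derivable. It therefore suffices to show that $\gseq{}{\bot}$ has no derivation in \LDIinf.

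Suppose, for contradiction, that $\gseq{}{\bot}$ were derivable. By Theorem~\ref{main-theorem} it would then have a cut-free derivation $\Pi$, and I would complete the argument by a case analysis on the last rule of $\Pi$, showing that no rule of \LDIinf can have $\gseq{}{\bot}$ as its conclusion. The assumption set of this sequent is empty, so no left-introduction rule (including the infinitary \exL rule), no \DL rule, no \IL rule, and no \axiom rule --- which requires a conclusion of the form $\gseq{A}{A}$ --- can apply; the weakening and contraction rules are excluded for the same reason, since their conclusions contain at least one assumption formula. On the right, the conclusion formula is $\bot$: it is a logical constant, not an atomic predicate, so neither \DR nor \IR is applicable, and no right-introduction rule produces $\bot$ (the rules \topR, \andR, the two $\vee$-right rules, \impR, and the infinitary \allR each produce $\top$ or a strictly compound formula). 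Finally, since $\Pi$ is cut-free, its last rule is not a multicut. Hence $\Pi$ cannot exist, a contradiction, and so \LDIinf is consistent.

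The reasoning is elementary; the only point requiring care is verifying that the enumeration of last-rule cases is exhaustive and that $\bot$ is correctly excluded from the atomic-formula side conditions of \DR, \IR, and \axiom. No genuine obstacle arises, so this corollary follows routinely once Theorem~\ref{main-theorem} is in hand.
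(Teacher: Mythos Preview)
Your proof is correct and follows essentially the same approach as the paper: invoke Theorem~\ref{main-theorem} to obtain a cut-free derivation of $\gseq{}{\bot}$ and then observe by case analysis on the last rule that no such derivation exists. The paper states this case analysis in a single sentence, whereas you spell out each rule class explicitly; one small refinement is that the precise reason \DR and \IR do not apply is that $\bot$ has no clause in the definition $\mc D$, rather than merely that it is a logical constant.
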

\begin{proof}
  It suffices to notice by case analysis that there are no cut-free
  derivations of $\vdash \bot$. Since every derivation has a cut-free normal
  form by Theorem~\ref{main-theorem}, $\vdash\bot$ is not derivable in \LDIinf.
\end{proof}

\subsection{Consistency of the Full Logic}

The goal is now to define an interpretation of \LDI in \LDIinf,
which will relate the derivability of a sequent in \LDI to the derivability
of a ground sequent in \LDIinf. This will allow us to reduce the consistency
of the former to that of the latter. The interpretation is specified
by the following lemma.
\begin{lemma}[Grounding Lemma]
  If $\seq{\mc Y}{\Gamma}{ C}$ is derivable in \LDI, then for any $\mc Y$-grounding
  substitution $\delta : \emptyset\to\mc Y$, the ground sequent
  $\gseq{\Gamma\delta}{C\delta}$ is derivable in \LDIinf. 
\end{lemma}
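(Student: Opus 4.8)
The plan is to prove the Grounding Lemma by induction on the \LDI derivation $\mathcal D$ of $\seq{\mc Y}{\Gamma}{C}$, analysing its last rule and pushing the grounding substitution $\delta$ through it. Since \DL in \LDI may have infinitely many premises, this is really an induction on the ordinal height of $\mathcal D$ (equivalently, a structural induction on its well-founded tree). Throughout I would rely on the routine substitution identities forced by the syntax: that $(\cdot)\delta$ commutes with the propositional connectives, that grounding $C[t/x]_{\mc Y}$ by $\delta$ agrees with first grounding $C$ by $\delta$ and then substituting $t\delta$, and that substitution composition is associative. None of these require creativity, so the content is entirely in the case analysis.

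For every rule common to \LDI and \LDIinf that does not touch eigenvariables --- \axiom, \multc, \wL, \cL, the rules for $\wedge$, $\vee$, $\imp$, and the still-finitary \allL and \exR --- the argument is a direct simulation: apply $\delta$ to every sequent of the rule instance, observe that each premise still lies over $\mc Y$, invoke the induction hypothesis on it with the same $\delta$ (in \allL and \exR the witness $t$ lies over $\mc Y$, so one uses the ground witness $t\delta$), and reassemble the identical \LDIinf rule over the grounded sequents. The \IR case is the same, using that the fixed-point operator $B$ and the constant $p$ are closed, so $(B\app p\app\vec t)\delta = B\app p\app(\vec t\delta)$. For \DR, if the \LDI premise is justified by $\rho$ with $H\rho = A$ and $B' = B\rho$, then composing $\rho$ with $\delta$ yields a ground witness of $\dfn(H\defn_{\mc X}B, A\delta, \epsilon_\emptyset, B'\delta)$, so the grounded premise delivered by the induction hypothesis is exactly what \LDIinf's \DR rule consumes.

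The cases where the rule shape changes carry the real work. For \allR, with premise $\seq{\mc Y,x}{\Gamma}{C'}$ and conclusion $\seq{\mc Y}{\Gamma}{\typedforallx{\tau}{x}{C'}}$: for each $t\in\ground(\tau)$, the extension of $\delta$ by $[t/x]$ is a $(\mc Y,x)$-grounding substitution, so the induction hypothesis produces a \LDIinf derivation of exactly the $t$-indexed premise of the infinitary \allR rule applied to $(\typedforallx{\tau}{x}{C'})\delta$; collecting these over all ground $t$ and applying that rule gives the conclusion, and \exL is dual. For \IL, the minor premise $\vec x; B\app S\app\vec x\vdash S\app\vec x$ is grounded by applying the induction hypothesis with $[\vec u/\vec x]$ for each $\vec u\in\ground(\vec\alpha)$ (again using that $B$ and $S$ are closed), the major premise $\seq{\mc Y}{\Gamma,S\app\vec t}{C}$ is grounded by $\delta$, and the infinitary \IL rule of \LDIinf assembles these with the same invariant $S$. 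For \DL, whose \LDI premises are indexed by arbitrary substitutions $\theta$ for $\mc Y$ paired with matching clauses, the key move is to select only the premises with $\theta := \delta$: since $\range(\delta)=\emptyset$, each such premise $\seq{\emptyset}{\Gamma\delta, B'}{C\delta}$ is already ground, and because the head $H$ lies in the higher-order pattern fragment and $A\delta$ is ground, the witnessing match $\rho$ is unique and ground, so the bodies $B'$ range over exactly those that occur in the premises of the \LDIinf \DL rule for $A\delta$. Applying the induction hypothesis to these premises with the trivial substitution $\epsilon_\emptyset$ yields precisely the family of \LDIinf premises needed, and \DL in \LDIinf --- possibly with no premises, when $A\delta$ matches no clause head --- delivers $\gseq{\Gamma\delta, A\delta}{C\delta}$.

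I expect the main obstacle to be organizational rather than conceptual: keeping the substitution-composition and commutation bookkeeping uniformly correct across all cases, and, in the \DL case, verifying carefully that the $\theta=\delta$ instances of the \LDI rule stand in exact correspondence with the instances of the \LDIinf \DL rule on $A\delta$ --- which is exactly where the higher-order pattern restriction on clause heads is used (it is what makes the grounded \DL premises well-formed, as already noted for \LDIinf in the previous subsection). No cut-elimination or level argument enters here; since the statement concerns derivability alone, the whole proof is a single induction simulating each \LDI step in \LDIinf.
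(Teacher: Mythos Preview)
Your proposal is correct and follows essentially the same approach as the paper: induction on the derivation with case analysis on the last rule, simulating each \LDI step by the corresponding \LDIinf step and, in the \DL case, selecting only the premises with $\theta=\delta$ so as to ``prune branches.'' The paper presents only the \DL case explicitly, whereas you have also sketched the treatment of the rules whose shape changes (\allR, \exL, \IL) by extending $\delta$ over all ground instances; these details are straightforward and match the intended argument.
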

\begin{proof}
  The proof is by induction on the height of the derivation of $\seq{\mc Y}{\Gamma}{ C}$, and by case analysis on the last rule of the derivation.
  We only show the case where the derivation ends with \DL below.
  \case{\underline{\em Case \DL}} Suppose the derivation ends with \DL on the atom $A$
  where $\Gamma = \Gamma', A$. It is clear that if
  $\dfn(H\defn_{\mc X} B, A\delta, \epsilon_\emptyset, B')$,
  then $\dfn(H\defn_{\mc X} B, A, \delta, B')$,
  and furthermore that $\range(\delta) = \emptyset$ since $\delta$ is a
  ground substitution. Thus a ground derivation of a premise
  $\gseq{\Gamma'\delta, B'}{ C\delta}$ below is obtained by applying
  the induction hypothesis to the ground premise
  $\seq{}{\Gamma'\delta,B'}{ C\delta}$ with the empty grounding substitution. 
  \begin{mathpar}
    {\inferrule*[right=\DL]
      {\{\seq{\range(\theta)}{ \Gamma'\theta, B'}{ C\theta}\st
        \dfn(H\defn_{\mc X} B, A, \theta, B')\}}
      {\seq{\mc Y}{ \Gamma', A }{ C}}
    }\quad\leadsto\quad
    {\inferrule*[right=\DL]
      {\{\gseq{\Gamma'\delta, B' }{ C\delta}\st
        \dfn(H\defn_{\mc X} B, A\delta, \epsilon_\emptyset,B')\}}
      {\gseq{\Gamma'\delta, A\delta }{ C\delta}}
    }
  \end{mathpar}
  The effect of the interpretation on a derivation is therefore to
  prune branches from the \DL rule.
\end{proof}
As a result of the grounding lemma, we obtain the following
\begin{lemma}[Interpretation lemma]
  \label{interpretation}
  If \LDIinf is consistent, then so is \LDI.
\end{lemma}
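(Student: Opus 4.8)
The plan is to establish the contrapositive: assuming \LDI is inconsistent, I would show that \LDIinf is inconsistent as well. Recall that, by the convention used here, consistency of a logic is the underivability of the sequent whose variable context and assumption set are empty and whose conclusion is $\bot$; concretely, \LDI is consistent iff $\cdot;\cdot\vdash\bot$ has no derivation in \LDI, and \LDIinf is consistent iff $\vdash\bot$ has no derivation in \LDIinf.

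So suppose $\cdot;\cdot\vdash\bot$ is derivable in \LDI. The key observation is that this sequent already lies in the ground fragment: its variable context is empty, its assumption multiset is empty, and $\bot$ is a logical constant and hence a formula over the empty variable context. I would therefore instantiate the Grounding Lemma with $\mc Y := \emptyset$, $\Gamma$ the empty multiset, $C := \bot$, and $\delta$ the unique $\emptyset$-grounding substitution $\epsilon_\emptyset : \emptyset \to \emptyset$. Since $\Gamma\delta$ is again the empty multiset and $\bot\delta = \bot$, the Grounding Lemma produces a derivation of $\vdash \bot$ in \LDIinf, so \LDIinf is inconsistent. Contraposing, if \LDIinf is consistent then $\cdot;\cdot\vdash\bot$ is not derivable in \LDI, which is precisely the consistency of \LDI.

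I do not anticipate any genuine obstacle at this step: all of the real work has already been carried out, namely the cut-elimination theorem for \LDIinf (Theorem~\ref{main-theorem}), which yields Corollary~\ref{ground-consistency}, and the Grounding Lemma, which transports derivability of \LDI sequents into \LDIinf. The only point that merits a line of justification is that the falsity witness $\cdot;\cdot\vdash\bot$ belongs to the common fragment of the two calculi, so that the Grounding Lemma applies to it verbatim with the trivial grounding substitution; this is immediate from the definition of \LDIinf, whose formulas are exactly the \LDI formulas that are well formed over the empty variable context, and whose sequents are required to have no eigenvariables.
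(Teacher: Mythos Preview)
Your proof is correct and follows exactly the same approach as the paper's own proof: both argue by contraposition, invoking the Grounding Lemma on the sequent $\cdot;\cdot\vdash\bot$ (with the trivial grounding substitution) to transfer a hypothetical \LDI derivation of $\bot$ into an \LDIinf derivation of $\bot$. The only difference is that you spell out the instantiation of the Grounding Lemma in more detail than the paper does.
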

\begin{proof}
  From grounding lemma, if $\vdash \bot$ is derivable in \LDI, then it is
  also derivable in \LDIinf. Thus, the contrapositive also holds:
  If $\vdash \bot$ is not derivable in \LDIinf, then it is not derivable in \LDI.
\end{proof}
The interpretation lemma in conjunction with the consistency of
\LDIinf (Corollary~\ref{ground-consistency}) allow us to obtain
\begin{corollary}[Consistency of \LDI]
  \LDI is consistent.
\end{corollary}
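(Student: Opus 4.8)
The plan is to derive this corollary directly from the two results established immediately above: the Interpretation lemma (Lemma~\ref{interpretation}), which reduces consistency of \LDI to consistency of the ground logic \LDIinf, and Corollary~\ref{ground-consistency}, which establishes that \LDIinf is consistent. Composing these two facts gives the claim at once, so the proof itself is essentially a one-line application of modus ponens.

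Spelling out the chain a little, I would first recall that ``consistent'' is to be read here as the non-derivability of the sequent $\vdash \bot$. Corollary~\ref{ground-consistency} supplies this for \LDIinf, and it in turn rests on the cut-admissibility result Theorem~\ref{main-theorem} together with the trivial syntactic observation that no rule of \LDIinf has $\vdash \bot$ (empty context, conclusion $\bot$) as its conclusion, so a cut-free derivation of it cannot exist. The Interpretation lemma then does the rest: its proof is the contrapositive of the Grounding lemma, which sends any \LDI-derivation of $\vdash \bot$ to an \LDIinf-derivation of the same sequent by pruning the branches of each \DL inference down to the single branch selected by the grounding substitution. Hence if $\vdash \bot$ had a derivation in \LDI it would have one in \LDIinf, contradicting Corollary~\ref{ground-consistency}.

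For this corollary there is no real obstacle beyond getting the bookkeeping right, since all the substantive work has already been done upstream: the normalization and reducibility lemmas (Lemmas~\ref{normalization-lemma} and~\ref{reducibility-lemma}) that yield cut admissibility, and in particular the Unfolding lemma, whose proof is the one place where strict stratification of inductive predicates is genuinely needed. The only point requiring care is to ensure that the meaning of ``consistent'' invoked in the statement is exactly the one for which \LDIinf was shown consistent, namely underivability of $\vdash\bot$, so that the two results compose cleanly.
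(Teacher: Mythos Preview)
Your proposal is correct and mirrors the paper's own argument exactly: the corollary is obtained by combining Corollary~\ref{ground-consistency} (consistency of \LDIinf) with Lemma~\ref{interpretation} (the Interpretation lemma). The additional commentary you give on the upstream dependencies is accurate but goes beyond what the paper states for this corollary.
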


\section{Conclusion}\label{sec:conclusion}

In this paper, we have shown how to accommodate inductive definitions
in a logic that permits a more permissive form of fixed-point
definitions.
We have observed that the provision of inductive definitions must be
done with some care in order to ensure consistency.
We have also seen how the resulting logic allows us to encode
arguments based on logical relations that are often used in
formulating and proving properties about programs and programming
languages.
This work represents an intermediate step towards building a more
flexible form for definitions into the logic underlying the
Abella proof assistant.
That logic additionally includes co-induction, generic quantification,
and a notion called \emph{nominal abstraction} which provides a means
for analyzing occurrences of objects in expressions that are governed
by generic quantifiers. 
In a longer version of this paper, we have shown that the greater
flexibility in fixed-point definitions can be supported even in the
presence of generic quantification.
The extension of these results to include the remaining  features of
the logic is the subject of ongoing work.

The work that we have described builds on that of Tiu.
Another effort that has a related flavor is that of Baelde and
Nadathur~\cite{BaNa12}.
That effort resulted in the development of a natural deduction
calculus called \NJI, which extends deduction
modulo~\cite{dowek03jsl} with inductive and co-inductive definitions. 
This calculus allows fixed-point definitions to be constructed with a
flexibility that overlaps significantly with what is supported by
ground stratification, with the difference that such definitions are
realized through rewrite rules.
An important aspect of \NJI is an equality elimination rule that
builds in the ability to generalize equality assumptions, which then
allows substitutions into proofs to be defined in a way that maintains
their original structure.
This feature has been exploited in showing strong normalizability for
\NJI, thereby verifying its consistency.
While \NJI includes a treatment of co-induction, it does not support
generic quantification and nominal abstraction.
We believe that the approach underlying this paper may be a preferred
way to provide for the richer form of definitions because it does not
permit rewrite rules that can modify the meaning of equality.
However, there are insights to be gained from the formulation of
equality elimination in \NJI in developing a calculus close to \LDI
that also includes the additional features of interest and for when we
can prove a cut elimination result.

We would like to extend this work in the future in a few ways in
addition to including the features mentioned in the logic.
First, we would like to build greater flexibility into the treatment
of induction.
Currently, inductive clauses do not allow the predicate being defined
to appear to the left of an implication in the body.
However, we believe that this requirement can be weakened to prohibit
such occurrences only in truly negative positions.
Second, perhaps taking the cue from \NJI, we would like to develop a
non-ground calculus close to \LDI for which we can prove a
cut-elimination result.
Finally, in a much more ambitious direction, we would like to add a 
higher order quantification capability to logics of definition.
This would on the one hand allow for greater modularity in
definitions, statements of theorems and their proofs, and may, on the 
other hand, allow for the statement and proofs of stronger results
such as strong normalizability for System F~\cite{GTL89}.


\end{document}